\documentclass[a4paper,10pt]{amsart}
\usepackage[utf8]{inputenc}
\usepackage{amsmath}
\usepackage{amsfonts}
\usepackage{amsthm}
\usepackage{amssymb}

\newcommand{\p}[1]{\mathbb{P}}
\newcommand{\fa}[1]{\forall}

\newcommand{\bP}{\mathbb{P}}

\newcommand{\ti}{\times}
\newcommand{\al}{\alpha}
\newcommand{\si}{\sigma}

\newcommand{\R}{\mathbb{R}}
\newcommand{\cO}{\mathcal{O}}
\newcommand{\cD}{\mathcal{D}}
\newcommand{\cP}{\mathcal{P}}

\theoremstyle{definition}
\newtheorem{defi}{Definition}[section]

\theoremstyle{remark}
\newtheorem{rem}[defi]{Remark}

\theoremstyle{plain}
\newtheorem{lemma}[defi]{Lemma}

\newtheorem{prop}[defi]{Proposition}
\newtheorem{thm}[defi]{Theorem}

\allowdisplaybreaks

\newcommand{\sgn}{\operatorname{sgn}}
\newcommand{\med}{\operatorname{median}}
\newcommand{\Var}{\operatorname{Var}}
\newcommand{\Span}{\operatorname{Span}}
\newcommand{\Sp}{\mathbb{S}}
\newcommand{\n}[1]{\left\lVert#1\right\rVert}
\newcommand{\eu}[1]{\left\lVert#1\right\rVert_2}
\newcommand{\skp}[2]{\langle #1, #2\rangle}
\newcommand{\Cov}[2]{\operatorname{Cov}(#1, #2)}
\newcommand{\eps}{\varepsilon}

\newcommand{\del}{\delta}
\newcommand{\Pb}{\mathbb{P}}
\newcommand{\E}{\mathbb{E}}

\newcommand{\od}{\odot}

\begin{document}

\title[Fast binary embeddings with Gaussian circulant matrices]{Fast binary embeddings with Gaussian circulant matrices: improved bounds}

\author{Sjoerd Dirksen}
\address{RWTH Aachen University, Lehrstuhl C f{\"u}r Mathematik (Analysis), Pontdriesch 10, 52062 Aachen,  Germany}
\email{dirksen@mathc.rwth-aachen.de}

\author{Alexander Stollenwerk}
\address{RWTH Aachen University, Lehrstuhl C f{\"u}r Mathematik (Analysis), Pontdriesch 10, 52062 Aachen,  Germany}
\email{stollenwerk@mathc.rwth-aachen.de}

\keywords{Binary embeddings, Johnson-Lindenstrauss embeddings, circulant matrices}
\subjclass[2010]{60B20,68Q87}

\maketitle

\begin{abstract}
We consider the problem of encoding a finite set of vectors into a small number of bits while approximately retaining information on the angular distances between the vectors. By deriving improved variance bounds related to binary Gaussian circulant embeddings, we largely fix a gap in the proof of the best known fast binary embedding method. Our bounds also show that well-spreadness assumptions on the data vectors, which were needed in earlier work on variance bounds, are unnecessary. In addition, we propose a new binary embedding with a faster running time on sparse data.  
\end{abstract}

\section{Introduction}

In this paper we are concerned with the problem of encoding a set of vectors $\cD$ located on the sphere in a high-dimensional space into a small number of bits. Our goal is to obtain an efficient encoding which approximately retains information on the angular distances between the points in the data set. Formally, we want to construct a \emph{$\del$-binary embedding} of $\cD$, i.e., a map $f:\Sp^{n-1}\rightarrow\{-1,1\}^m$ and a distance $d$ on $\{-1,1\}^m$ so that 
$$|d(f(p),f(q)) - d_{\Sp^{n-1}}(p,q)| \leq \del \qquad \text{for all } p,q \in \cD,$$
where $d_{\Sp^{n-1}}$ denotes the normalized geodesic distance on the sphere. Recent works have shown that this goal can be achieved using maps of the form
$$f_A(x) = \sgn(Ax), \qquad x\in\R^n,$$
with $A\in\R^{m\ti n}$. That is, one first embeds the data into a lower-dimensional space using a linear map and subsequently takes entry-wise signs of the embedded vectors. Apart from the intrinsic interest in binary embeddings for complexity reduction and computational purposes, they play an important role in one-bit compressed sensing \cite{JLB13,PlV13} and have been considered as computationally cheap layers in deep neural networks \cite{CYFKCC15}.\par 
In many successful approaches, $A$ is a random matrix drawn independently of the data. For instance, is not hard to show that if $A$ is an $m\ti n$ standard Gaussian matrix $G$ and $\cD\subset \Sp^{n-1}$ is any finite set of $N$ points, then with probability exceeding $1-\eta$, $f_A$ is a $\del$-binary embedding for $\cD$ into $(\{-1,1\}^m,d_H)$, provided that the number of bits satisfies $m\gtrsim \del^{-2}\log(N/\eta)$ \cite{PlV14,YCP15}. Here, $d_H$ denotes the normalized Hamming distance and $\lesssim$ hides an absolute constant. It is known \cite{YCP15} that this bit complexity is \emph{optimal}: any oblivious random map from $\cD$ into $\{-1,1\}^m$ which is a $\del$-binary embedding with probability at least $1-\eta$ must satisfy $m\gtrsim \del^{-2}\log(N/\eta)$.\par 
Although the Gaussian binary embedding achieves the optimal bit complexity on finite sets, it has a clear computational downside: as a Gaussian random matrix is densely populated, the implementation involves slow matrix-vector multiplications. A simple idea to decrease the running time, suggested in \cite{YCP15}, is to first reduce the dimensionality of the data by using a fast Johnson-Lindenstrauss transform \cite{AiC09} before applying a Gaussian binary embedding. We will refer to this map as the \emph{accelerated Gaussian binary embedding}. For appropriate choices of the involved parameters, this map $f_A$ with $A=G\Phi_{\operatorname{FJL}}$, is a $\del$-binary embedding into $(\{-1,1\}^m,d_H)$ with optimal bit complexity and, if $\log N\lesssim \del^{2}n^{1/2}$, runs in time $\cO(n\log n)$. However, outside of this parameter range the running time deteriorates.\par 
To obtain near-linear time and improved space complexity in a wider range of parameters, focus has recently shifted from purely Gaussian matrices to more structured random matrices. It was experimentally observed \cite{YKG14} that an equally good embedding into $(\{-1,1\}^m,d_H)$ can be obtained by replacing the dense Gaussian matrix $A$ by a subset of $m$ rows of a Gaussian circulant matrix with randomized column signs. That is, one considers  $A=R_I C_g D_{\eps}$, where, given $I\subset [n]$, $R_I$ is a subsampling operator that restricts a vector to its entries indexed by $I$, $C_g$ is a circulant matrix generated by a standard Gaussian vector and $D_{\eps}$ is a diagonal matrix with independent random signs on the diagonal. This matrix allows for matrix-vector multiplication in time $\cO(n\log n)$ by exploiting the fast Fourier transform. The experiments in \cite{YKG14} indicate that the performance of the embedding deteriorates if one leaves out the column sign randomization.\par 
In view of these experiments, it is desirable to try to prove that the matrix $A=R_I C_g D_{\eps}$ induces a binary embedding with the optimal bit complexity. As $d_H(f_A(p),f_A(q))$ is an unbiased estimator of the geodesic distance $d_{\Sp^{n-1}}(p,q)$, it would suffice to show that $d_H(f_A(p),f_A(q))$ concentrates strongly around its mean for fixed points $p,q$. Due to the non-linearity of the sign function and the dependencies between the circulant rows, this is a tough problem. In fact, even an optimal bound for the variance of $d_H(f_A(p),f_A(q))$ is not known. Nevertheless, some interesting partial results have recently been established, which can roughly be grouped in two directions. In both cases, the results have been used to construct new fast binary embeddings by appropriately modifying the matrix $A$ using additional randomness.\par 
In the first direction, developed by the authors in \cite{YBK15} and later quantitatively improved in \cite{Oym16}, it is shown that the map $f_A$ associated to $A=R_IC_gD_{\eps}$ is a $\del$-binary embedding on a finite set $\cD$ with large probability, provided that \emph{all vectors in $\cD$ as well as in $\cD-\cD$ are well-spread}. One can show that an arbitrary set of points will have these two properties with high probability after they have been pre-processed using a randomized Hadamard transform. To be more precise, the corresponding result of \cite{Oym16} is as follows. Let $I$ be a set of $m$ indices selected uniformly at random, let $g_1,g_2$ be independent standard Gaussian vectors, $D_{g_1}$, $D_{g_2}$ the diagonal matrices with $g_1,g_2$ on the diagonal and let $H$ be an $n\ti n$ Hadamard matrix. Consider $A=R_I C_g D_{g_1} H D_{g_2}$ and let $\cD\subset \Sp^{n-1}$ be any set of $N$ points. If 
$$\log N\lesssim \del^2 (\log n)^{-1} n^{1/3}, \qquad m\gtrsim \del^{-3}\log N,$$ 
then with high probability $f_A$ is a $\del$-binary embedding of $\cD$ into $(\{-1,1\}^m,d_H)$. In contrast to the accelerated Gaussian binary embedding, $f_A(x)$ can always be computed in time $\cO(n\log n)$. However, the parameter range in which $f_A$ is provably a $\del$-binary embedding is more restrictive.\par
The second direction, pursued by the authors in \cite{YCP15}, seeks to prove an \emph{optimal variance bound} for $d_H(f_A(p),f_A(q))$, meaning that the bound decays as $\cO(1/m)$. Once this is established, one can stack independent copies of $A$ into a matrix to obtain a high probability binary embedding. To be precise, \cite{YCP15} introduced the map $f_A$, with $A=\Psi\Phi_{\operatorname{FJL}}$, where $\Phi_{\operatorname{FJL}}$ is a $n'\ti n$ fast Johnson-Lindenstrauss transform and
$$\Psi=\begin{pmatrix}
\Psi^{(1)}\\
 \vdots\\
\Psi^{(B)}
\end{pmatrix}\in \R^{m\times n'}
,$$
consists of $B$ independent $m/B\times n'$ subsampled Gaussian Toeplitz matrices with randomized column signs. It was claimed in \cite{YCP15} that for appropriate choices of $n'$ and $B$, the map $f_A$ is a $\del$-binary embedding into $(\{-1,1\}^m, d_{\operatorname{med},B})$, where $d_{\operatorname{med},B}$ is the median over the $B$ block-wise Hamming distances. The embedding attains the optimal bit complexity and it runs in $\cO(n\log n)$ when $\log N\lesssim \del n^{1/2}/(\log(1/\del))^{1/2}$. To our knowledge, this is the only embedding with guarantees superior to the ones for the accelerated Gaussian binary embedding, in the sense that it is simultaneously a $\del$-binary embedding with the optimal bit complexity and runs in near-linear time in a larger parameter range than $\log N\lesssim \del^{2}n^{1/2}$. Unfortunately, as we point out in Remark~\ref{rem:proofGap}, the proof in \cite{YCP15} contains a subtle gap that cannot be fixed in a trivial way.\par 
The goal of our paper is two-fold. On the one hand, by deriving improved variance bounds we contribute to the theoretical understanding of the numerical experiments with circulant binary embeddings from \cite{YKG14}. On the other hand, our work largely fixes the proof gap in \cite{YCP15}. We start by considering a binary embedding $f_A$ with $A=R_IC_g$. We show, on the one hand, that there are two vectors $p,q \in S^{n-1}$ that cannot be embedded well for \emph{any} choice of $I\subset [n]$. On the other hand, we prove that if $I$ consists of $m$ dyadic integers and $p,q$ are $m$-sparse, then 
$$\operatorname{Var}\big(d_H(f_A(p),f_A(q))\big)\lesssim \frac{1}{m}.$$
Thus, in general a subsampled Gaussian circulant matrix does \emph{not} yield a good binary embedding if one does not randomize the column signs of the matrix. However, by subsampling judiciously, one can still obtain optimal variance decay for relatively sparse vectors. Next, we consider the situation where we additionally use column sign randomization, i.e., we set $A=R_IC_gD_{\eps}$. In Theorem~\ref{varbound} we show that if $I$ is a set of $m$ indices chosen uniformly at random, then for any $p,q\in \Sp^{n-1}$  
\begin{equation}
\label{eqn:varBoundIntro}
\operatorname{Var}\big(d_H(f_A(p),f_A(q))\big) \lesssim \frac{1}{m} + \frac{1}{\sqrt{n}},
\end{equation}
which is optimal for $m\leq \sqrt{n}$. Up to the $1/\sqrt{n}$ factor, this is the variance bound claimed and needed for the proof in \cite{YCP15}. As a consequence, we can resurrect the aforementioned guarantees for the second binary embedding map from \cite{YCP15}, at least when $n\gtrsim \del^{-6}$. Our result (\ref{eqn:varBoundIntro}) is always superior to the variance bounds based on well-spreadness conditions from \cite{YBK15} (i.e., even if $p,q$ are maximally well-spread). We conjecture that the $1/\sqrt{n}$ factor in (\ref{eqn:varBoundIntro}) can be removed. This would remove the condition $n\gtrsim \del^{-6}$.\par 
As a final contribution of our work, we analyze modifications of the two embeddings from \cite{YCP15} that are obtained by replacing the fast Johnson-Lindenstrauss transform $\Phi_{\operatorname{FJL}}$ in these maps by a sparse Johnson-Lindenstrauss transform $\Phi_{\operatorname{SJL}}$ \cite{DKS10,BOR10,KaN14}. The resulting embeddings again realize the optimal bit complexity and have near-linear running times on $\sqrt{n}$-sparse vectors under similar conditions on $n,\del$ and $N$. In contrast to the $\Phi_{\operatorname{FJL}}$-based maps, however, the running times can improve even further if the vectors in $\cD$ are very sparse. We refer to Section~\ref{sec:binaryEmbedding} for further details.    

\subsection{Notation}

We start by fixing some notation that will be used throughout. For any given $x=(x_1,\ldots,x_{n})\in \R^n$, we define $D_x\in \R^{n\ti n}$ to be the diagonal matrix with $x$ on its diagonal and we let $C_x\in \R^{n\ti n}$ be the circulant matrix generated by $x$. That is,
$$D_x = \begin{pmatrix}
x_1 & 0 & \cdots & 0 & 0\\
0 & x_1 & 0 & & 0\\
0 & 0 & x_2 & \ddots & \vdots\\
\vdots & &  \ddots & \ddots & 0\\
0 & \cdots & & 0 & x_{n}
\end{pmatrix},\ \ \ C_x=
\begin{pmatrix}
x_{n} & x_{1} & x_2 & \cdots & x_{n-2} & x_{n-1}\\
x_{n-1} & x_n & x_1 &  \cdots & x_{n-3} & x_{n-2}\\
x_{n-2} & x_{n-1} & x_n & \cdots & x_{n-4} & x_{n-3}\\
\vdots & \vdots & \vdots&  \vdots & \vdots & \vdots\\
x_1 & x_2 & x_3 & \cdots & x_{n-1} & x_n
\end{pmatrix}
$$ 
We use the shorthand notation $[n]=\{1,\ldots,n\}$. For $i,j\in [n]$, we will always take $i+j$ to mean addition modulo $n$. If we define the shift operator $T:\R^n\rightarrow \R^n$ by
$$(Tx)_i = x_{i+1},$$
then we can write the $i$-th row of $C_x$ as $T^{n-i}x$. Closely related to circulant matrices are Toeplitz matrices. For a vector $y\in \R^{2n-1}$ the Toeplitz matrix $T_y\in \R^{n\ti n}$ generated by $y$ is defined by
$$T_y=
\begin{pmatrix}
y_{2n-1} & y_{1} & y_2 & \cdots & y_{n-2} & y_{n-1}\\
y_{2n-2} & y_{2n-1} & y_1 & \cdots & y_{n-3} & y_{n-2}\\
y_{2n-3} & y_{2n-2} & y_{2n-1}& \cdots & \cdots & y_{n-3}\\
\vdots & \vdots & \vdots & \vdots & \vdots & \vdots\\
y_n & y_{n+1} & y_{n+2} & \cdots & y_{2n-2} & y_{2n-1}.
\end{pmatrix}
$$ 
Observe that $T_y$ can be viewed as the upper left $n\ti n$ block of the circulant matrix $C_y \in \R^{2n\ti 2n}$. As a final building block for the construction of binary embeddings we define, for any given $I\subset [n]$, the restriction operator $R_I:\R^n\rightarrow \R^{|I|}$ by 
$$R_I x = (x_i)_{i\in I}.$$
Below we will use several different distances. For $x,y\in \R^n$ with $x,y\neq 0$ we define 
$$d_{\Sp^{n-1}}(x,y)=\frac{1}{\pi}\arccos\Big(\frac{\skp{x}{y}}{\eu{x}\eu{y}}\Big).$$ 
In particular, for two points $x,y$ on the unit sphere, $d_{\Sp^{n-1}}(x,y)$ denotes the normalized geodesic distance between $x$ and $y$. The normalization is chosen so that opposite points on the sphere have distance one. On the discrete cube $\{-1,1\}^m$ we consider two different distances. First, we consider the normalized Hamming distance, i.e.,
$$d_H(x,y) = \frac{1}{m} \sum_{i=1}^m 1_{x_i\neq y_i}.$$
Second, we use a distance considered before in \cite{YCP15}. Let $B$ be a given block size so that $m'=m/B$ is an integer. Accordingly, we define $d_{\operatorname{med},B}$ to be the median of the block-wise Hamming distances, that is, for    $x=(x_1,\ldots,x_B)$ and $y=(y_1,\ldots,y_B)$, with $x_i, y_i \in \{-1,1\}^{m'}$ for $i\in [B]$, we set
$$d_{\operatorname{med},B}(x,y) = \med((d_H(x_i,y_i))_{i=1}^B).$$

\subsection{Acknowledgement}

The authors would like to thank the reviewers for valuable comments, in particular for a suggestion that substantially shortened the proof of Lemma~\ref{ab}. A.\ Stollenwerk acknowledges funding by the European Research Council through ERC Starting Grant StG 258926. S.\ Dirksen and A.\ Stollenwerk acknowledge funding by the Deutsche Forschungsgemeinschaft (DFG) through the project Quantized Compressive Spectrum Sensing (QuaCoSS), which is part of the priority program SPP 1798 Compressed Sensing in Information Processing (COSIP).

\section{Variance bounds}

In this section we consider an $m\times n$ subsampled Gaussian circulant matrix $A=R_I C_g$, with $g$ an $n$-dimensional standard Gaussian, as well as a version with randomized column signs, i.e., $A=R_IC_g D_{\eps}$, where $\eps$ is a Rademacher vector, i.e., a vector of $n$ independent random signs ($\bP(\eps_i=1)=\bP(\eps_i=-1)=1/2$). Denote the $i$-th row of $A$ by $a_i$. By the invariance of the standard Gaussian distribution under reflections, we have in both cases $a_i \overset{d}{\sim} g$. Hence for any $p,q\in \Sp^{n-1}$
\begin{align}
\label{eqn:unbEst}
\E (d_H(f_A(p),f_A(q)))&=\frac{1}{m}\sum_{i\in I} \bP (\sgn(\skp{a_i}{p})\neq \sgn(\skp{a_i}{q}))\\
                                   &=\bP (\sgn(\skp{g}{p})\neq \sgn(\skp{g}{q}))=d_{\Sp^{n-1}}(p,q),  \nonumber
\end{align}
where the final equality is \cite[Lemma 3.2]{GoW95}. This shows that $d_H(f_A(p),f_A(q))$ is an unbiased estimator of the geodesic distance. To investigate whether these random matrices induce a $\del$-binary embedding $f_A$ of finite point sets into $(\{-1,1\}^m,d_H)$, we first consider a more simple problem and try to bound the quantity
$$\Var \big(d_H(f_A(p),f_A(q))\big) = \Var \big(d_H(\sgn(Ap),\sgn(Aq))\big)$$
for arbitrary $p,q\in \Sp^{n-1}$. Define the indicator random variable
$$X_i = 1_{\sgn(\skp{a_i}{p}) \neq \sgn(\skp{a_i}{q})}.$$
We can then write
\begin{align}
\label{eqn:varExpand}
\Var \big(d_H(f_A(p),f_A(q))\big) & = \Var\Big(\frac{1}{m}\sum_{i=1}^m X_i\Big) \nonumber \\
& = \frac{1}{m^2}\sum_{i=1}^m \Var(X_i) + \frac{2}{m^2}\sum_{1\leq i<j\leq m} \Cov{X_i}{X_j}. 
\end{align}
If $A$ is a standard Gaussian matrix, then the second term vanishes and it is easy to obtain an optimal bound. In the case of Gaussian circulant matrices, the rows of $A$ are heavily dependent and it is therefore non-trivial to bound the cross-terms $\Cov{X_i}{X_j}$. This is the main technical challenge addressed in this section.\par
It is worthwhile to note that the best possible variance decay in $m$ that we can expect for arbitrary $p,q\in \Sp^{n-1}$ is 
\begin{equation}
\label{eqn:optVarDecay}
\Var \big(d_H(f_A(p),f_A(q))\big)\lesssim \frac{1}{m}.
\end{equation}
Indeed, by Chebyshev's inequality, this estimate implies
$$\bP(|d_H(f_A(p),f_A(q)) - d_{\Sp^{n-1}}(p,q)|\geq \del)\lesssim \frac{1}{m \del^2}$$ 
and in particular, on every two-point set $\{p,q\}$ the map $f_A$ is an (oblivious) $\del$-binary embedding with probability $1/2$, say, if $m\gtrsim \del^{-2}$. By \cite[Theorem 3.1]{YCP15}, this scaling in $\del$ is optimal.\par
In the first part of this section, we investigate variance bounds for a subsampled Gaussian circulant matrix. It turns out that there exist vectors $p,q$ for which the variance does not decay at all. However, one can still get optimal decay for sparse vectors if one subsamples dyadically (see Proposition~\ref{withoutrad}). In the second part of this section, we consider a subsampled Gaussian circulant matrix with randomized column signs. Our main result, Theorem~\ref{varbound}, says that (\ref{eqn:optVarDecay}) holds for $m\leq \sqrt{n}$ if the set of sampled rows $I$ is chosen uniformly at random. This result is a key ingredient for our results on fast binary embeddings in Section~\ref{sec:binaryEmbedding}.\par 
For any $x\in \R^n$ we define the binary random variable
$$Z(x)=\sgn(\skp{g}{x}).$$
The following is a key observation to control the covariance terms $\Cov{X_i}{X_j}$ for $i\neq j$.
\begin{thm}\label{Cov} For any $x_1, x_2, y_1, y_2\in \Sp^{n-1}$
\begin{equation*}
|\Cov{1_{Z(x_1)\neq Z(x_2)}}{1_{Z(y_1)\neq Z(y_2)}}| \leq 8 \max\{|\skp{x_1}{y_1}|, |\skp{x_1}{y_2}|, |\skp{x_2}{y_1}|, |\skp{x_2}{y_2}|\}.
\end{equation*}
\end{thm}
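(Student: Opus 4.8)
The plan is to reduce the covariance to a quantity depending only on the $4\times4$ Gram matrix of $x_1,x_2,y_1,y_2$, observe that it vanishes when the cross-correlations vanish, and control it by differentiating along a homotopy that scales those cross-correlations to zero. Since $Z(x)\in\{-1,1\}$ one has $1_{Z(x_1)\neq Z(x_2)}=\tfrac12(1-Z(x_1)Z(x_2))$, and likewise for $y_1,y_2$, so
$$\Cov{1_{Z(x_1)\neq Z(x_2)}}{1_{Z(y_1)\neq Z(y_2)}}=\tfrac14\,\Cov{Z(x_1)Z(x_2)}{Z(y_1)Z(y_2)}.$$
Putting $(\xi_1,\xi_2,\xi_3,\xi_4):=(\skp{g}{x_1},\skp{g}{x_2},\skp{g}{y_1},\skp{g}{y_2})\sim\cN(0,\Sigma)$ with $\Sigma$ the Gram matrix of the four vectors, it suffices to prove $|\Cov{\sgn\xi_1\sgn\xi_2}{\sgn\xi_3\sgn\xi_4}|\le 32\rho$, where $\rho:=\max_{i\in\{1,2\},\,j\in\{3,4\}}|\Sigma_{ij}|=\max_{i,j}|\skp{x_i}{y_j}|$. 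If $x_1=\pm x_2$ or $y_1=\pm y_2$ the relevant indicator is a.s.\ constant and the covariance is $0$; and if $\rho>\sqrt3/2$ the bound $8\rho$ is trivial, since the covariance of two $\{0,1\}$-indicators is in absolute value at most $1/4$. So we may assume $x_1\neq\pm x_2$, $y_1\neq\pm y_2$ and $\rho\le\sqrt3/2$.

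\emph{Homotopy in the cross-correlations.} Write $\Sigma=\left(\begin{smallmatrix}B&M\\ M^{T}&C\end{smallmatrix}\right)$ in $2\times2$ blocks, so $B,C\succ0$ by the normalization above, and for $t\in[0,1]$ let $\Sigma_t$ be $\Sigma$ with $M$ replaced by $tM$. The Schur complement criterion shows $\Sigma_t\succeq0$ for all $t\in[0,1]$: positivity is equivalent to $C-t^{2}M^{T}B^{-1}M\succeq0$, which follows from the $t=1$ case since $t^{2}M^{T}B^{-1}M\preceq M^{T}B^{-1}M$. Thus each $\Sigma_t$ is a covariance matrix, and since the marginal laws of $(\xi_1,\xi_2)$ and of $(\xi_3,\xi_4)$ are independent of $t$, the quantities $\E_{\Sigma_t}[\sgn\xi_1\sgn\xi_2]$ and $\E_{\Sigma_t}[\sgn\xi_3\sgn\xi_4]$ do not change with $t$; hence
$$\Cov{\sgn\xi_1\sgn\xi_2}{\sgn\xi_3\sgn\xi_4}=\int_0^1\frac{d}{dt}\,\E_{\Sigma_t}\Bigl[\textstyle\prod_{k=1}^{4}\sgn\xi_k\Bigr]\,dt=\int_0^1\sum_{i\in\{1,2\},\,j\in\{3,4\}}\Sigma_{ij}\,\partial_{\Sigma_{ij}}\E_{\Sigma_t}\Bigl[\textstyle\prod_{k=1}^{4}\sgn\xi_k\Bigr]\,dt.$$
By a Gaussian integration-by-parts identity, $\partial_{\Sigma_{ij}}\E[\prod_k f_k(\xi_k)]=\E[f_i'(\xi_i)f_j'(\xi_j)\prod_{k\neq i,j}f_k(\xi_k)]$ for $i\neq j$; with $f=\sgn$ (whose distributional derivative is $2\delta_0$) this formally equals $4\,p_{ij}(0,0)\,\E[\sgn\xi_{i'}\sgn\xi_{j'}\mid\xi_i=\xi_j=0]$, where $\{i',j'\}$ is the complementary index pair and $p_{ij}$ the density of $(\xi_i,\xi_j)$. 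Since $p_{ij}(0,0)=(2\pi\sqrt{1-\Sigma_{ij}^{2}})^{-1}$ and the conditional expectation has modulus $\le1$, each term is at most $2/(\pi\sqrt{1-\rho^{2}})$ in absolute value along the whole path; combining this with $\sum_{i,j}|\Sigma_{ij}|\le4\rho$ gives $|\Cov{\sgn\xi_1\sgn\xi_2}{\sgn\xi_3\sgn\xi_4}|\le 8\rho/(\pi\sqrt{1-\rho^{2}})\le16\rho/\pi$, so the covariance of the indicators is at most $4\rho/\pi<8\rho$.

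\emph{The main obstacle} is to legitimize the integration-by-parts identity for $\sgn$, whose derivative is a point mass rather than a function. I would do this by mollifying $\sgn$ to a smooth bounded nondecreasing $\sgn_\eps$, for which the identity is classical, noting that the derivative bound $|\partial_{\Sigma_{ij}}\E[\prod_k\sgn_\eps(\xi_k)]|\le\|p_{ij}\|_\infty(\int\sgn_\eps')^{2}=4\|p_{ij}\|_\infty$ is uniform in $\eps$, integrating along the homotopy, and then letting $\eps\to0$ by dominated convergence (each $\xi_k$ has no atom at $0$, so $\prod_k\sgn_\eps(\xi_k)\to\prod_k\sgn\xi_k$ a.s.). A secondary nuisance is that $\Sigma_t$ can be singular when the four vectors are linearly dependent; one circumvents this by replacing $\Sigma_t$ with $\Sigma_t+\del I$ throughout and sending $\del\to0$, the positivity of $B,C$ ensuring that $\Sigma_t+\del I$ stays a nondegenerate covariance matrix along the entire path.
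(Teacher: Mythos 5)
Your argument is correct, and it takes a genuinely different route from the paper's. The paper proceeds geometrically: after rotating $x_1,x_2,y_1,y_2$ into a canonical four-dimensional frame, it replaces $y_1,y_2$ by nearby unit vectors orthogonal to $\Span\{x_1,x_2\}$ (for which the covariance vanishes by independence) and controls the resulting error via geodesic-distance estimates and the bespoke two-dimensional bound of Lemma~\ref{ab}, with a three-way case analysis. You instead use the identity $1_{Z(x_1)\neq Z(x_2)}=\tfrac12(1-Z(x_1)Z(x_2))$ (valid a.s.\ since each $\skp{g}{x_i}$ has no atom at $0$) to pass to the covariance of sign products, and then run a Plackett/Slepian-type interpolation along the linear path $\Sigma_t=(1-t)\operatorname{diag}(B,C)+t\Sigma$, which stays positive semidefinite by convexity of the PSD cone (your Schur-complement argument also works once $B\succ 0$). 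Since the block marginals are constant in $t$, the covariance is exactly $\int_0^1\frac{d}{dt}\E_{\Sigma_t}[\prod_k\sgn\xi_k]\,dt$, and Gaussian integration by parts bounds each of the four terms by $4\,p_{ij}(0,0)\le 2/(\pi\sqrt{1-\rho^2})$, where only the nondegeneracy of the \emph{cross pairs} $(\xi_i,\xi_j)$ is needed — automatic once $\rho\le\sqrt3/2$, the complementary case being trivial because an indicator covariance is at most $1/4$ in modulus. Both proofs rest on the same mechanism (the covariance vanishes when all cross inner products do), but you perturb the covariance matrix where the paper perturbs the vectors. Your route avoids the case analysis and Lemma~\ref{ab} entirely and yields the sharper constant $4/\pi$ in place of $8$; the technical points you flag (mollifying $\sgn$, adding $\del I$ to handle a singular Gram matrix) are handled correctly, since the bound on the right-hand side is uniform in $\eps,\del,t$ and the left-hand side converges to the covariance by dominated convergence.
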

\begin{rem}
In \cite[Lemma 6]{YBK15}, it was shown that 
\begin{equation}
\label{eqn:YBK6}
|\Cov{1_{Z(x_1)\neq Z(x_2)}}{1_{Z(y_1)\neq Z(y_2)}}|\leq 2\max\{\|\Pi y_1\|_2,\|\Pi y_2\|_2\},
\end{equation}
where $\Pi$ is the projection onto $\Span\{x_1,x_2\}$. If $\{x_1,z_2\}$ is any orthonormal basis of $\Span\{x_1,x_2\}$, then the right hand side is up to constants equal to 
$$\max\{|\skp{x_1}{y_1}|, |\skp{x_1}{y_2}|, |\skp{z_2}{y_1}|, |\skp{z_2}{y_2}|\}.$$
Note that this is, again up to constants, larger than the right hand side in Theorem~\ref{Cov}. Indeed, as $x_2=\langle x_1,x_2\rangle x_1 + \langle z_2,x_2\rangle z_2$, it follows for $i=1,2$,
$$|\langle x_2,y_i\rangle| \leq |\langle x_1,y_i\rangle| + |\langle z_2,y_i\rangle|.$$
In particular, (\ref{eqn:YBK6}) is not a good bound if $x_1,x_2$ are far from orthogonal. For example, let $x_1=e_1$, $x_2=(\cos \al,\sin\al)$, $y_1=e_2$ and $y_2=e_3$. The right hand side in Theorem~\ref{Cov} is equal to $10|\skp{x_2}{y_1}|=10|\sin \al|$, which scales as $\al$ if $\al$ is small. On the other hand, $\|\Pi y_1\|_2=1$, so that (\ref{eqn:YBK6}) gives a trivial bound.
\end{rem}
Theorem~\ref{Cov} can more easily be shown in the case where $\skp{x_1}{x_2}=0$. In order to tackle the non-orthogonal case, we use Lemma~\ref{ab} below. 
\begin{lemma}\label{ab} Let $g_1$, $g_2$ and $g_3$ be independent, standard Gaussian random variables. For $a,b\in \R$ define
\begin{align*}
f(a,b)=\Pb( \sgn(g_1) \neq \sgn(g_1+ ag_3), \sgn(g_2)\neq \sgn(g_2 + bg_3)).
\end{align*}
Then
\begin{align}\label{abineq}
f(a,b)\leq \frac{1}{2\pi}  |ab|.
\end{align}
\end{lemma}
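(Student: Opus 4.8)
The plan is to condition on $g_3$. Given the value of $g_3$, the event $\{\sgn(g_1)\neq\sgn(g_1+a g_3)\}$ depends only on $g_1$ while $\{\sgn(g_2)\neq\sgn(g_2+b g_3)\}$ depends only on $g_2$; since $g_1$ and $g_2$ are independent of each other and of $g_3$, the conditional probability of the intersection factorizes, so that
\[
f(a,b) = \E_{g_3}\Bigl[\,\Pb\bigl(\sgn(g_1)\neq\sgn(g_1+a g_3)\mid g_3\bigr)\cdot\Pb\bigl(\sgn(g_2)\neq\sgn(g_2+b g_3)\mid g_3\bigr)\,\Bigr].
\]

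The key elementary estimate is then that, for any fixed $t\in\R$, one has $\Pb(\sgn(g_1)\neq\sgn(g_1+at))\leq |at|/\sqrt{2\pi}$. Indeed, the signs of $g_1$ and $g_1+at$ differ precisely when $g_1$ lies in the interval with endpoints $0$ and $-at$, which has length $|at|$; since the standard Gaussian density is bounded by $(2\pi)^{-1/2}$, the probability that $g_1$ lands in any interval of length $\ell$ is at most $\ell/\sqrt{2\pi}$. Applying this with $t=g_3$ for the first factor, and with $b$ in place of $a$ for the second, and then using $\E[g_3^2]=1$, gives
\[
f(a,b)\leq \E_{g_3}\Bigl[\frac{|a\,g_3|}{\sqrt{2\pi}}\cdot\frac{|b\,g_3|}{\sqrt{2\pi}}\Bigr] = \frac{|ab|}{2\pi}\,\E[g_3^2] = \frac{|ab|}{2\pi},
\]
which is exactly (\ref{abineq}).

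There is no real obstacle once the conditioning step is in place; the only points needing a word of care are the degenerate cases $a=0$, $b=0$, or $g_3=0$, where the relevant interval collapses to a point and the bound holds trivially, and the convention for $\sgn(0)$, which is irrelevant since $\Pb(g_1=0)=\Pb(g_1+a g_3=0)=0$. I expect that a more direct approach would compute $f(a,b)$ explicitly, e.g.\ as a two-dimensional Gaussian orthant-type integral, before estimating, and that the idea of conditioning on $g_3$ to decouple the two sign flips is precisely the simplification alluded to in the acknowledgement.
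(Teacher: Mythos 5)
Your proof is correct and follows essentially the same route as the paper's: condition on $g_3$ to factorize the two sign-flip events, bound each conditional probability by the length of an interval times the supremum $\tfrac{1}{\sqrt{2\pi}}$ of the Gaussian density, and finish with $\E[g_3^2]=1$. The paper's only extra step is a preliminary symmetrization reducing to $a,b\geq 0$ and $g_3\geq 0$, writing $f(a,b)=2\Pb(0\leq g_1\leq ag_3,\,0\leq g_2\leq bg_3)$ before applying the same density bound; your direct observation that the sign-flip event is an interval of length $|ag_3|$ with an endpoint at $0$ renders that step unnecessary.
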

\begin{proof} By invariance of the standard Gaussian distribution under reflections, i.e.,
$$(g_1,g_2,g_3) \overset{d}{\sim} (\eps_1 g_1, \eps_2 g_2, \eps_3 g_3) \text{ for all } \eps_1, \eps_2, \eps_3 \in \{\pm 1\},$$ 
we may assume that $a,b\geq 0$. Moreover, using invariance of the Gaussian distribution under reflection several times, we obtain
\begin{align*}
f(a,b) &= \Pb( g_1\geq 0,  g_1+ ag_3\leq 0, \sgn(g_2)\neq \sgn(g_2 + bg_3))\\
	 &\qquad \qquad \qquad +\Pb( g_1\leq 0,  g_1+ ag_3\geq 0, \sgn(g_2)\neq \sgn(g_2 + bg_3))\\
	 &= \Pb( g_1\geq 0,  g_1+ ag_3\leq 0, \sgn(g_2)\neq \sgn(g_2 + bg_3))\\
	 &\qquad \qquad \qquad +\Pb( -g_1\geq 0,  -g_1- ag_3\leq 0, \sgn(-g_2)\neq \sgn(-g_2 - bg_3))\\
	 &= 2\Pb( g_1\geq 0,  g_1+ ag_3\leq 0, \sgn(g_2)\neq \sgn(g_2 + bg_3))\\
          &= 2\Big(\Pb( g_1\geq 0,  g_1+ ag_3\leq 0, g_2\geq 0, g_2 + bg_3\leq 0)\\ 
          &  \qquad \qquad \qquad + \Pb( g_1\geq 0,  g_1+ ag_3\leq 0, g_2\leq 0, g_2 + bg_3\geq 0) \Big)\\
          &= 2\Big( \Pb( g_1\geq 0,  g_1- ag_3\leq 0, g_2\geq 0, g_2 - bg_3\leq 0)\\ 
          & \qquad \qquad \qquad + \Pb( g_1\geq 0,  g_1- ag_3\leq 0, g_2\leq 0, g_2 - bg_3\geq 0) \Big).
\end{align*}
Clearly, the second term vanishes and so
\begin{align*}
f(a,b)  &=2\Pb( 0\leq g_1\leq a g_3, 0\leq g_2 \leq bg_3).
\end{align*}
Notice that if $g$ is standard Gaussian, then for any $t\geq 0$  
$$\Pb(0\leq g\leq t)\leq \tfrac{1}{\sqrt{2\pi}}t.$$
Using this inequality we find
\begin{align*}
\Pb( 0\leq g_1\leq a g_3, 0\leq g_2 \leq bg_3)&=\E(1_{0\leq g_3}1_{0\leq g_1\leq a g_3}1_{0\leq g_2 \leq bg_3})\\
&=\E_{g_3}(1_{0\leq g_3}\E_{g_1,g_2}(1_{0\leq g_1\leq a g_3}1_{0\leq g_2 \leq bg_3}))\\
&=\E_{g_3}(1_{0\leq g_3}\E_{g_1}(1_{0\leq g_1\leq a g_3})\E_{g_2}(1_{0\leq g_2 \leq bg_3}))\\
&\leq \E_{g_3}(1_{0\leq g_3}(\tfrac{1}{\sqrt{2\pi}}a g_3)(\tfrac{1}{\sqrt{2\pi}}bg_3))\\
&=\tfrac{ab}{2\pi}\E(1_{0\leq g_3}(g_3)^2)=\tfrac{ab}{4\pi},
\end{align*}     
which implies
$f(a,b)\leq \frac{ab}{2\pi}$.  
\end{proof}
\begin{rem} The constant $\tfrac{1}{2\pi}$ in inequality \eqref{abineq} of Lemma~\ref{ab} is optimal. Indeed, consider the symmetric case $a=b$. From the proof of Lemma~\ref{ab} we see that
\begin{align*}
\label{eqn:faaFormula}
f(a,a)&=2\Pb( 0\leq g_1\leq a g_3, 0\leq g_2 \leq ag_3)\\
&=2\Pb(\{0\leq g_1\leq g_2\leq a g_3\}\cup\{0\leq g_2\leq g_1\leq a g_3\} )\\
&=2\big(\Pb(0\leq g_1\leq g_2\leq a g_3)+\Pb(0\leq g_2\leq g_1\leq a g_3)\big)\\
&=4\Pb(0\leq g_1\leq g_2\leq a g_3)\\
&=4\int_{0}^{\infty} \frac{1}{\sqrt{2\pi}} e^{-x_1^2/2}dx_1 \int_{x_1}^{\infty} \frac{1}{\sqrt{2\pi}} e^{-x_2^2/2}dx_2 \int_{\frac{x_2}{a}}^{\infty} \frac{1}{\sqrt{2\pi}} e^{-x_3^2/2}dx_3.
\end{align*}
Observe that
\begin{align*}
\frac{df}{da}(a,a)  &= \frac{4}{(2\pi)^{3/2}}   \int_{0}^{\infty} e^{-x_1^2/2} dx_1 \int_{x_1}^{\infty}  e^{-x_2^2/2}(-e^{- x_2^2/2a^2}) (-\frac{x_2}{a^2})dx_2\\
                            &= \frac{4}{(2\pi)^{3/2} a^2} \int_{0}^{\infty}  e^{-x_1^2/2}dx_1 \int_{x_1}^{\infty}  e^{-\frac{x_2^2}{2}(\frac{a^2+1}{a^2})} x_2 dx_2\\
                            &= \frac{4}{(2\pi)^{3/2} a^2} \int_{0}^{\infty}  e^{-x_1^2/2} \Big(\frac{a^2}{a^2+1}\Big) e^{-\frac{x_1^2}{2}(\frac{a^2+1}{a^2})}dx_1\\
                            &=\frac{4}{(2\pi)^{3/2} (a^2+1)} \int_{0}^{\infty}   e^{-\frac{x_1^2}{2}(\frac{2a^2+1}{a^2})}dx_1\\
                            &=\frac{4}{(2\pi)^{3/2} (a^2+1)} \sqrt{\frac{\pi}{2}} \frac{a}{\sqrt{2a^2+1}} = \frac{1}{\pi}\frac{a}{(a^2+1)\sqrt{2a^2+1}}.
\end{align*}
Hence, 
\begin{equation}
\label{flimit}
\lim_{a\to 0}\frac{f(a,a)}{a^2} =\lim_{a\to 0}\frac{\frac{df}{da}(a,a)}{2a} = \lim_{a\to 0} \frac{1}{2\pi}\frac{1}{(a^2+1)\sqrt{2a^2+1}}=\frac{1}{2\pi}.
\end{equation}
\end{rem}
In the proof of Theorem~\ref{Cov} we will use the following simple estimates. 
\begin{lemma}
\label{geo} 
Suppose $x,y\in \Sp^{n-1}$ with $\skp{x}{y}\geq 0$. Then
\begin{align}\label{geoin1}
d_{\Sp^{n-1}}(x,y)\leq \frac{1}{2^{3/2}} \eu{x-y}
\end{align}
and, as a consequence,
\begin{align}\label{geoin2}
d_{\Sp^{n-1}}(x,y)\leq \frac{1}{2} \sqrt{1-\skp{x}{y}^2}.
\end{align}
\end{lemma}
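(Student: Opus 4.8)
The plan is to reduce everything to a single-variable inequality. Write $t=\skp{x}{y}\in[0,1]$. Since $x,y\in\Sp^{n-1}$ we have $d_{\Sp^{n-1}}(x,y)=\tfrac1\pi\arccos t$ and $\eu{x-y}^2=2-2t$, so $\tfrac{1}{2^{3/2}}\eu{x-y}=\tfrac12\sqrt{1-t}$. Hence \eqref{geoin1} is equivalent to the claim
\[
\arccos t\ \leq\ \tfrac{\pi}{2}\sqrt{1-t}\qquad\text{for all }t\in[0,1].
\]
To prove this, substitute $t=\cos\theta$ with $\theta\in[0,\tfrac\pi2]$, so that $\arccos t=\theta$ and $\sqrt{1-t}=\sqrt{1-\cos\theta}=\sqrt2\,\sin(\theta/2)$. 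Setting $u=\theta/2\in[0,\tfrac\pi4]$, the inequality becomes $\sin u\geq \tfrac{2\sqrt2}{\pi}\,u$ on $[0,\tfrac\pi4]$.

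This last inequality follows from concavity of $\sin$ on $[0,\pi]$: the chord joining $(0,0)$ and $(\tfrac\pi4,\sin\tfrac\pi4)=(\tfrac\pi4,\tfrac1{\sqrt2})$ lies below the graph of $\sin$ on $[0,\tfrac\pi4]$, and its slope is exactly $\tfrac{1/\sqrt2}{\pi/4}=\tfrac{2\sqrt2}{\pi}$; thus $\sin u\geq \tfrac{2\sqrt2}{\pi}u$ there, with equality only at the endpoints $u=0$ and $u=\tfrac\pi4$. (Equivalently, one can check directly that $g(t):=\tfrac\pi2\sqrt{1-t}-\arccos t$ satisfies $g(0)=g(1)=0$ and has a single sign change of $g'$ on $(0,1)$, forcing $g\geq0$.) This establishes \eqref{geoin1}.

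For \eqref{geoin2}, note that $t\geq 0$ gives $1-t\leq (1-t)(1+t)=1-t^2$, hence $\tfrac12\sqrt{1-t}\leq\tfrac12\sqrt{1-t^2}=\tfrac12\sqrt{1-\skp{x}{y}^2}$; combining this with \eqref{geoin1} yields the second bound. I do not expect any real obstacle here: the only substantive point is recognizing that the reduced inequality $\sin u\geq \tfrac{2\sqrt2}{\pi}u$ on $[0,\tfrac\pi4]$ is a chord-versus-concave-function statement (a sharpened Jordan-type inequality on the shorter interval), and everything else is bookkeeping with the identities $\eu{x-y}^2=2-2\skp{x}{y}$ and $1-\cos\theta=2\sin^2(\theta/2)$.
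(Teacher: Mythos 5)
Your proof is correct. Both you and the paper reduce the statement to the scalar inequality $\tfrac{1}{\pi}\arccos t\leq\tfrac12\sqrt{1-t}$ for $t=\skp{x}{y}\in[0,1]$, and both deduce \eqref{geoin2} from \eqref{geoin1} via $1-t\leq 1-t^2$ for $t\geq 0$; the only difference is how the scalar inequality is verified. The paper sets $f(z)=\tfrac12\sqrt{1-z}-\tfrac1\pi\arccos z$, checks $f(0)=f(1)=0$, and shows $f'$ has a single sign change (positive before $z_0=\tfrac{16}{\pi^2}-1$, negative after), which is exactly the alternative you sketch in your parenthetical. Your main route---substituting $t=\cos\theta$ so the claim becomes $\sin u\geq\tfrac{2\sqrt2}{\pi}u$ on $[0,\pi/4]$ and invoking concavity of $\sin$ against the chord through the endpoints---is a derivative-free restatement of the same fact and arguably cleaner, since it makes visible that equality holds exactly at $t=0$ and $t=1$ and avoids locating the interior critical point explicitly. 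Both arguments are complete; there is no gap.
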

\begin{proof} \eqref{geoin2} immediately follows from \eqref{geoin1} as
$$\frac{1}{2^{3/2}} \eu{x-y}=\frac{1}{2^{3/2}} \sqrt{2-2\skp{x}{y}}=\frac{1}{2}\sqrt{1-\skp{x}{y}}\leq \frac{1}{2}\sqrt{1-\skp{x}{y}^2}.$$
To prove \eqref{geoin1}, define $z=\skp{x}{y}\in [0,1]$. Then
$d_{\Sp^{n-1}}(x,y)=\frac{1}{\pi}\arccos(z)$ and\\ $\frac{1}{2^{3/2}} \eu{x-y}=\frac{1}{2} \sqrt{1-z}$. We need to show that 
$f(z)=\frac{1}{2} \sqrt{1-z}- \frac{1}{\pi}\arccos(z)\geq 0$ for $z\in [0,1]$. Clearly, $f(0)=f(1)=0$ and
$$f'(z)=-\frac{1}{4}\frac{1}{\sqrt{1-z}}+ \frac{1}{\pi} \frac{1}{\sqrt{1-z^2}}=\frac{-\frac{1}{4}\sqrt{1+z} + \frac{1}{\pi}}{\sqrt{1-z^2}}.$$
Hence $f'(z_0)=0$ for $z_0=\frac{16}{\pi^2}-1\in (0,1)$. 
Since $f'(z)>0$ for $z\in (0,z_0)$ and $f'(z)<0$ for $z\in (z_0,1)$, the result follows.
\end{proof}
We are now ready to prove Theorem~\ref{Cov}.
\begin{proof}[Proof of Theorem~\ref{Cov}] 
We may assume that $\skp{x_1}{x_2}\geq 0$. Indeed, if $\skp{x_1}{x_2}\leq 0$, then $\skp{x_1}{-x_2}\geq 0$ and
by using $d_{\Sp^{n-1}}(x_1,x_2)+d_{\Sp^{n-1}}(x_1,-x_2)=1$ and $\Pb(Z(y_1)\neq Z(y_2))=d_{\Sp^{n-1}}(y_1,y_2)$
 we see
\begin{align*}
\Cov{&1_{Z(x_1)\neq Z(-x_2)}}{1_{Z(y_1)\neq Z(y_2)}}\\
 &= \Pb\big( Z(x_1) \neq Z(-x_2),  Z(y_1)\neq Z(y_2)  \big)  -  d_{\Sp^{n-1}}(x_1,-x_2)d_{\Sp^{n-1}}(y_1,y_2)\\
										  &=d_{\Sp^{n-1}}(y_1,y_2)- \Pb\big( Z(x_1) \neq Z(x_2),  Z(y_1)\neq Z(y_2)  \big)\\ 
										  &\quad - \big(1- d_{\Sp^{n-1}}(x_1,x_2)\big)d_{\Sp^{n-1}}(y_1,y_2)\\
										  &=-\Pb\big( Z(x_1) \neq Z(x_2),  Z(y_1)\neq Z(y_2)  \big)+ d_{\Sp^{n-1}}(x_1,x_2)d_{\Sp^{n-1}}(y_1,y_2)\\
										  &=-\Cov{1_{Z(x_1)\neq Z(x_2)}}{1_{Z(y_1)\neq Z(y_2)}}, 
\end{align*}
which implies $$|\Cov{1_{Z(x_1)\neq Z(x_2)}}{1_{Z(y_1)\neq Z(y_2)}}|=|\Cov{1_{Z(x_1)\neq Z(-x_2)}}{1_{Z(y_1)\neq Z(y_2)}}|.$$
From now on, we assume $\skp{x_1}{x_2}\geq 0$. Clearly, $x_1,x_2, y_1, y_2$ span an at most $4$-dimensional subspace of $\R^n$. Hence one can construct an orthogonal matrix $O$, which maps $x_1, x_2, y_1$ and $y_2$ into $\R^4\times \{0\}^{n-4}$. In fact, we can choose $O$ so that
\begin{align*}
Ox_1 &=e_1, \; Ox_2 = a_1e_1+  a_2 e_2=:a, \; Oy_1=b_1e_1 + b_2 e_2 +b_3 e_3=:b,\\
 Oy_2&= c_1e_1+ c_2e_2+c_3e_3+c_4e_4=:c
\end{align*}
with $a_2, b_3, c_4 \geq 0$. This can be achieved by defining $O=O_4 O_3 O_2 O_1$, where 
$$O_1=\tilde{O_1},\quad  O_2 =\begin{pmatrix}
1 & 0 \\ 
0 & \tilde{O_2}  
\end{pmatrix}, \quad
O_3 =\begin{pmatrix}
1 & 0  & 0\\ 
0 & 1 & 0 \\
0 & 0 & \tilde{O_3}  
\end{pmatrix},\quad
O_4 =\begin{pmatrix}
1 & 0  & 0 & 0\\ 
0 & 1 & 0  & 0 \\
0 & 0 & 1 &  0\\
0 & 0 & 0 & \tilde{O_4}  
\end{pmatrix}$$
with $\tilde{O_i}\in \R^{(n+1- i) \times (n + 1 - i)}$ orthogonal matrices, such that 
$$O_1x_1=e_1,\; O_2 O_1x_2=a,\; O_3 O_2 O_1 y_1=b\; \text{ and }O_4 O_3 O_2 O_1 y_2=c.$$
Observe that $0\leq \skp{x_1}{x_2}=\skp{e_1}{a}=a_1$ and  $a_2=\sqrt{1-a_1^2}=\sqrt{1-\skp{x_1}{x_2}^2}$. We can assume $a_2>0$. Otherwise the statement is trivial, because then $\skp{x_1}{x_2}=1$, which implies $x_1=x_2$ and hence 
$\Cov{1_{Z(x_1)\neq Z(x_2)}}{1_{Z(y_1)\neq Z(y_2)}}=0$. Moreover, we may assume that $\gamma:=\sqrt{c_3^2+c_4^2}>0$, otherwise $1=\sqrt{c_1^2+c_2^2}\leq |c_1|+|c_2|$ and using inequality \eqref{geoin2} in Lemma~\ref{geo} yields
\begin{align*}
|\Cov{&1_{Z(x_1)\neq Z(x_2)}}{1_{Z(y_1)\neq Z(y_2)}}|\\
        &\leq \Pb\big(Z(x_1)\neq Z(x_2) \big)=d_{\Sp^{n-1}}(x_1,x_2)\leq \frac{\sqrt{1-\skp{x_1}{x_2}^2}}{2}\\
         &= \frac{a_2}{2} \leq \frac{a_2(|c_1|+|c_2|)}{2}\leq \frac{|c_1|+|a_2 c_2|}{2}= \frac{|\skp{e_1}{c}| + |\skp{a}{c} - a_1c_1|}{2}\\
         &\leq \frac{2|\skp{e_1}{c}| + |\skp{a}{c}|}{2}=\frac{2|\skp{x_1}{y_2}| + |\skp{x_2}{y_2}|}{2}.
\end{align*}
The idea is now to replace $b,c$ by suitable approximants $b',c'$, which are orthogonal to $e_1,a$. Set $b'=e_3$ and $c'=\frac{1}{\gamma} (c_3e_3+ c_4e_4)$. Clearly, $b',c'\in \Sp^{n-1}$ are orthogonal to $e_1, a$ and we can control their distance to the original vectors $b$ and $c$. Indeed, since $\skp{b}{b'}=b_3\geq 0$ and $\skp{c}{c'}=\gamma\geq 0$, inequality \eqref{geoin2} in Lemma~\ref{geo} yields 
\begin{align*} 
d_{\Sp^{n-1}}(b,b')\leq \frac{1}{2} \sqrt{1-\skp{b}{b'}^2} =\frac{1}{2} \sqrt{1-b_3^2}= \frac{1}{2} \sqrt{b_1^2+b_2^2}\leq \frac{1}{2} ( |b_1|+|b_2|)
\end{align*}
and
\begin{align*} 
d_{\Sp^{n-1}}(c,c')\leq \frac{1}{2}  \sqrt{1-\skp{c}{c'}^2} = \frac{1}{2} \sqrt{1-(c_3^2+c_4^2)}= \frac{1}{2}  \sqrt{c_1^2+c_2^2}\leq  \frac{1}{2}  (|c_1|+|c_2|).
\end{align*}
Define $$\eps = \max\{|\skp{x_1}{y_1}|, |\skp{x_1}{y_2}|, |\skp{x_2}{y_1}|, |\skp{x_2}{y_2}|\}.$$
Then
$$|b_1|=|\skp{e_1}{b}|=|\skp{x_1}{y_1}|\leq \eps \text{ as well as } |c_1|=|\skp{e_1}{c}|=|\skp{x_1}{y_2}|\leq \eps.$$
Similarly we can estimate
\begin{equation}
\label{eqn:a2b2Est}
|a_2b_2|=|\skp{a}{b}- a_1b_1|\leq |\skp{a}{b}|+ |a_1b_1|= |\skp{x_2}{y_1}|+ |a_1b_1|\leq 2\eps
\end{equation}
and
\begin{equation}
\label{eqn:a2c2Est}
|a_2c_2|=|\skp{a}{c}- a_1c_1|\leq |\skp{a}{c}|+ |a_1c_1|= |\skp{x_2}{y_2}|+ |a_1c_1|\leq 2\eps,
\end{equation}
which yields  $$|b_2|\leq \frac{2\eps}{a_2} \text{ as well as }  |c_2|\leq \frac{2\eps}{a_2}.$$
In summary we obtain the bounds
\begin{align}\label{dbdc}
d_{\Sp^{n-1}}(b,b')\leq \frac{\eps}{2}+ \frac{\eps }{a_2} \text{ and }  d_{\Sp^{n-1}}(c,c')\leq \frac{\eps}{2}+ \frac{\eps }{a_2}.
\end{align}
The rotational invariance of the standard Gaussian distribution implies
\begin{align*}
\Cov{1_{Z(x_1)\neq Z(x_2)}}{1_{Z(y_1)\neq Z(y_2)}}=\Cov{1_{Z(e_1)\neq Z(a)}}{1_{Z(b)\neq Z(c)}}.																	      
\end{align*}
Moreover, since $\skp{e_1}{b'}= \skp{e_1}{c'}= \skp{a}{b'}=\skp{a}{c'}=0$, $1_{Z(e_1)\neq Z(a)}$ and $1_{Z(b')\neq Z(c')}$ are independent and in particular
$$\Cov{1_{Z(e_1)\neq Z(a)}}{1_{Z(b')\neq Z(c')}} = 0.$$
Using the triangle inequality we now obtain
\begin{align}\label{threesummands}
 |\Cov{&1_{Z(x_1)\neq Z(x_2)}}{1_{Z(y_1)\neq Z(y_2)}}|\\
 &= |\Cov{1_{Z(e_1)\neq Z(a)}}{1_{Z(b)\neq Z(c)}} - \Cov{1_{Z(e_1)\neq Z(a)}}{1_{Z(b')\neq Z(c')}}| \nonumber \\
 &\leq |\Pb\big( Z(e_1) \neq Z(a),  Z(b)\neq Z(c)  \big)- \Pb\big( Z(e_1) \neq Z(a),  Z(b')\neq Z(c')  \big) | \nonumber  \\
&\quad+ |d_{\Sp^{n-1}}(e_1,a)d_{\Sp^{n-1}}(b,c)- d_{\Sp^{n-1}}(e_1,a)d_{\Sp^{n-1}}(b',c')| \nonumber  \\
&\leq \Pb\big(Z(e_1) \neq Z(a), Z(b)\neq Z(b')\big) + \Pb\big(Z(e_1) \neq Z(a), Z(c)\neq Z(c')\big) \nonumber\\
&\quad+ d_{\Sp^{n-1}}(e_1,a)\big(d_{\Sp^{n-1}}(b,b')+ d_{\Sp^{n-1}}(c,c')\big). \nonumber
\end{align}   
In the last step, we have used 
$$d_{\Sp^{n-1}}(b,c)\leq d_{\Sp^{n-1}}(b,b') + d_{\Sp^{n-1}}(b',c')+ d_{\Sp^{n-1}}(c',c),$$ 
and that by decomposing the event
\begin{align*}
\{Z(b)\neq Z(c)\} &= \{Z(b)\neq Z(c), Z(b)=Z(b'), Z(c)=Z(c')\}\\
                           &\cup \{Z(b)\neq Z(c), \big(Z(b)\neq Z(b') \vee Z(c)\neq Z(c')\big)\},
\end{align*} 
we obtain
\begin{align*}
&\;\;\;\;\; \Pb\big( Z(e_1) \neq Z(a),  Z(b)\neq Z(c) \big) \leq  \Pb\big( Z(e_1) \neq Z(a),  Z(b')\neq Z(c')\big)\\
&\quad + \Pb\big(Z(e_1) \neq Z(a), Z(b)\neq Z(b')\big) + \Pb\big(Z(e_1) \neq Z(a), Z(c)\neq Z(c')\big).
\end{align*}
It remains to bound all three summands appearing on the far right hand side of \eqref{threesummands}. We estimate using \eqref{dbdc} and Lemma \ref{geo}
\begin{align*}
d_{\Sp^{n-1}}(e_1,a)\big(d_{\Sp^{n-1}}(b,b')+ d_{\Sp^{n-1}}(c,c')\big) &\leq d_{\Sp^{n-1}}(e_1,a)\Big(\eps + \frac{2\eps}{a_2}\Big)\\
                                                                                                            &\leq d_{\Sp^{n-1}}(e_1,a)\eps + \eps\leq 2\eps.
\end{align*}
Before estimating the other two summands in \eqref{threesummands}, observe that with $$k_b=\sqrt{b_2^2+b_3^2} \text{ and } k_c=\sqrt{c_2^2+c_3^2+c_4^2},$$
we obtain using Lemma~\ref{geo},
\begin{align}\label{rb}
d_{\Sp^{n-1}}\Big(b,\Big(0,\frac{b_2}{k_b},\frac{b_3}{k_b}\Big)\Big)\leq \frac{1}{2} \sqrt{1- \Big(\frac{b_2^2}{k_b}+ \frac{b_3^2}{k_b}\Big)^2}=\frac{1}{2} |b_1|\leq \frac{\eps}{2}
\end{align}
and similarly, 
\begin{align}\label{rc}
d_{\Sp^{n-1}}\Big(c, \Big(0,\frac{c_2}{k_c},\frac{c_3}{k_c},\frac{c_4}{k_c} \Big)\Big)\leq \frac{\eps}{2}.
\end{align}
Notice that if $k_b=0$ (respectively, $k_c=0$), then $b=\pm e_1$ (respectively, $c=\pm e_1$) and hence $\eps=1$, so that the result is trivial in this case.\par
Let us now estimate $\Pb\big(Z(e_1) \neq Z(a), Z(b)\neq Z(b')\big)$, the second summand on the far right hand side of \eqref{threesummands} can be bounded in the same fashion. We distinguish three cases.
First, if $a_2^2>\frac{1}{4}$, then using \eqref{dbdc}
$$\Pb\big(Z(e_1) \neq Z(a), Z(b)\neq Z(b')\big)\leq d_{\Sp^{n-1}}(b,b')\leq \frac{\eps}{2} + \frac{\eps}{a_2} \leq 3\eps.$$
Second, if $b_1^2+b_2^2> \frac{3}{4}$, then $$d_{\Sp^{n-1}}(b,b')=\frac{1}{\pi}\arccos(b_3)=\frac{1}{\pi}\arccos\Big(\sqrt{1-(b_1^2+b_2^2)}\Big)\geq \frac{1}{\pi}\arccos(1/2)\geq \frac{1}{\pi},$$ which implies by \eqref{dbdc}
\begin{align*}
\Pb\big(Z(e_1) \neq Z(a), Z(b)\neq Z(b')\big) \leq d_{\Sp^{n-1}}(e_1,a) &\leq  \frac{1}{2} \sqrt{1-\skp{e_1}{a}^2}\\
                                                                                                               &=\frac{a_2}{2} \leq \frac{3\eps}{4d_{\Sp^{n-1}}(b,b')}\leq \frac{3\pi \eps}{4}\leq 3\eps.
\end{align*}
Finally, suppose that $a_2^2\leq \frac{1}{4}$ and $b_1^2+b_2^2\leq \frac{3}{4}$. This implies
\begin{equation}
\label{eqn:fraca2b2}
\frac{|a_2b_2|}{a_1 b_3}=\frac{|a_2b_2|}{\sqrt{1-a_2^2} \sqrt{1-b_1^2-b_2^2}}\leq \frac{4}{\sqrt{3}}|a_2b_2| \text{ and } a_1, b_3>0.
\end{equation}
Using \eqref{rb} we find
\begin{align*}
& \Pb\big(Z(e_1) \neq Z(a), Z(b)\neq Z(b')\big) \\
&\qquad = \Pb\big(\sgn(g_1)\neq \sgn(a_1g_1+a_2g_2), \sgn(b_1g_1+b_2g_2+b_3g_3)\neq \sgn(g_3)\big)\\
                        							      &\qquad \leq  \Pb\big(\sgn(g_1)\neq \sgn(a_1g_1+a_2g_2), \sgn(b_2g_2+b_3g_3)\neq \sgn(g_3)\big)\\
							                     &\qquad \quad+ \Pb\big(\sgn(b_1g_1+b_2g_2+b_3g_3))\neq \sgn(b_2g_2+b_3g_3)\big)\\
							                     &\qquad \leq \Pb\Big(\sgn(g_1)\neq \sgn\Big(g_1+ \frac{a_2}{a_1}g_2\Big), \sgn(g_3)\neq \sgn\Big(\frac{b_2}{b_3}g_2+g_3\Big)\Big)\\
							                     &\qquad \quad + d_{\Sp^{n-1}}\Big(b,\Big(0,\frac{b_2}{k_b},\frac{b_3}{k_b}\Big)\Big)\\
							                     &\qquad \leq \Pb\Big(\sgn(g_1)\neq \sgn\Big(g_1+ \frac{a_2}{a_1}g_2\Big), \sgn(g_3)\neq \sgn\Big(\frac{b_2}{b_3}g_2+g_3\Big)\Big) +\frac{\eps}{2}.\\
\end{align*}
Applying Lemma~\ref{ab}, (\ref{eqn:fraca2b2}), and (\ref{eqn:a2b2Est}) we obtain
\begin{align*}
&\Pb\Big(\sgn(g_1)\neq \sgn\Big(g_1+ \frac{a_2}{a_1}g_2\Big), \sgn(g_3)\neq \sgn\Big(g_3+\frac{b_2}{b_3}g_2\Big)\Big) \\
& \qquad \qquad \leq \frac{|a_2b_2|}{2\pi a_1b_3}\leq \frac{2|a_2b_2|}{\pi \sqrt{3}}\leq \frac{4\eps}{\pi \sqrt{3}}.
\end{align*}
As a consequence,
$$\Pb\big(Z(e_1) \neq Z(a), Z(b)\neq Z(b')\big)\leq \frac{4\eps}{\pi \sqrt{3}}+ \frac{\eps}{2}\leq 3\eps.$$
In all three cases we find $$\Pb\big(Z(e_1) \neq Z(a), Z(b)\neq Z(b')\big)\leq 3\eps.$$
To finish the proof, we bound the second summand on the far right hand side of \eqref{threesummands}. If either $a_2^2>\frac{1}{4}$ or $c_1^2+c_2^2> \frac{3}{4}$, then 
$$ \Pb\big(Z(e_1) \neq Z(a), Z(c)\neq Z(c')\big)\leq 3 \eps$$ follows exactly as above, simply by replacing $b$ and $b'$ everywhere by $c$ and $c'$, respectively.
Let us now assume that $a_2^2\leq \frac{1}{4}$ as well as $1-\gamma^2=c_1^2+c_2^2\leq \frac{3}{4}$. Then
\begin{equation}
\label{eqn:fraca2c2}
\frac{|a_2c_2|}{a_1  \gamma}=\frac{|a_2c_2|}{\sqrt{1-a_2^2}\sqrt{1-c_1^2-c_2^2}}\leq \frac{4}{\sqrt{3}}|a_2c_2| \text{ and } a_1, \gamma>0.
\end{equation}
Using \eqref{rc} we estimate
\begin{align*}
&\Pb\big(Z(e_1) \neq Z(a), Z(c)\neq Z(c')\big) \\
& \qquad = \Pb\big(\sgn(g_1)\neq \sgn(a_1g_1+a_2g_2),\\
& \qquad \qquad \qquad \quad \; \sgn(c_1g_1+c_2g_2+c_3g_3 + c_4g_4)\neq \sgn(c_3g_3 + c_4g_4)  \big)\\
&\qquad \leq \Pb\big(\sgn(g_1)\neq \sgn(a_1g_1+a_2g_2), \sgn(c_2g_2+c_3g_3 + c_4g_4)\neq \sgn(c_3g_3 + c_4g_4) \big)\\
&\qquad \quad + \Pb\big(\sgn(c_1g_1+c_2g_2+c_3g_3+c_4g_4)\neq \sgn(c_2g_2+c_3g_3+c_4g_4)\big)\\
&\qquad = \Pb\big(\sgn(g_1)\neq \sgn(a_1g_1+a_2g_2), \sgn(c_2g_2+c_3g_3 + c_4g_4)\neq \sgn(c_3g_3 + c_4g_4) \big)\\
&\qquad \quad + d_{\Sp^{n-1}}\Big(c, \Big(0,\frac{c_2}{k_c},\frac{c_3}{k_c},\frac{c_4}{k_c} \Big)\Big)\\
&\qquad \leq \Pb\Big(\sgn(g_1)\neq \sgn\Big(g_1+ \frac{a_2}{a_1}g_2\Big), \\
&\qquad \qquad \qquad \quad  \sgn\Big(\frac{c_2}{\gamma}g_2+\frac{c_3g_3 + c_4g_4}{\gamma}\Big)\neq \sgn\Big(\frac{c_3g_3 + c_4g_4}{\gamma}\Big) \Big) + \frac{\eps}{2}\\
&\qquad =\Pb\Big(\sgn(g_1)\neq \sgn\Big(g_1+ \frac{a_2}{a_1}g_2\Big),  \sgn\Big(\frac{c_2}{\gamma}g_2+g'\Big)\neq \sgn(g')\Big) + \frac{\eps}{2},
\end{align*}
where in the last step we write $g':=\frac{1}{\gamma}(c_3g_3 + c_4g_4)$. Observe that $g'$ is a standard Gaussian random variable which is independent of $g_1$ and $g_2$. Therefore Lemma~\ref{ab}, (\ref{eqn:fraca2c2}), and (\ref{eqn:a2c2Est}) yield
\begin{align*}
&\Pb\Big(\sgn(g_1)\neq \sgn\Big(g_1+ \frac{a_2}{a_1}g_2\Big),\sgn(g')\neq   \sgn\Big(g'+\frac{c_2}{\gamma}g_2\Big)\Big)\\
&\qquad \qquad \qquad \leq \frac{|a_2c_2|}{2\pi a_1\gamma}\leq \frac{2|a_2c_2|}{\pi \sqrt{3}}\leq \frac{4\eps}{\pi \sqrt{3}},
\end{align*}
which implies that in all cases $ \Pb\big(Z(e_1) \neq Z(a), Z(c)\neq Z(c')\big)\leq 3 \eps$.
\end{proof}
\begin{rem} The dependence on $\eps= \max\{|\skp{x_1}{y_1}|, |\skp{x_1}{y_2}|, |\skp{x_2}{y_1}|, |\skp{x_2}{y_2}|\}$ in Theorem~\ref{Cov} is optimal up to a constant. We argue by contradiction.
Assume $\Cov{X}{Y}:=\Cov{1_{Z(x_1)\neq Z(x_2)}}{1_{Z(y_1)\neq Z(y_2)}}\leq C \eps^{s}$ for some constant $C>0$ and $s>1$. For $a>0$ define
 $x_1=e_1$, $x_2=\frac{1}{\sqrt{1+a^2}}(e_1+ a e_3)$, $y_1=e_2$ and $y_2=\frac{1}{\sqrt{1+a^2}}(e_2+ a e_3)$.
Then $\eps=\skp{x_2}{y_2}=\frac{a^2}{1+a^2}$ and
\begin{align*}
\Cov{X}{Y}&= \Pb \big( \sgn(g_1)\neq \sgn(g_1+ ag_3), \sgn(g_2) \neq \sgn(g_2+ ag_3) \big)- \E X \E Y\\
                &= f(a,a)-  g(a)^2,
 \end{align*}
where $f(a,a)$ as in Lemma~\ref{ab} and 
$$g(a)=\E X=\E Y=d_{\Sp^{n-1}}(x_1, x_2)= d_{\Sp^{n-1}}(y_1,y_2)= \frac{1}{\pi}  \arccos \Big(\frac{1}{\sqrt{1+a^2}}\Big).$$
We know $\lim_{a \to 0} \frac{f(a,a)}{a^2}=\frac{1}{2\pi}$ by \eqref{flimit}. Clearly, 
$$\Big(\arccos (\frac{1}{\sqrt{1+a^2}}) \Big)'=\frac{1}{1+a^2}$$
and as a consequence,
\begin{align*}
\lim_{a \to 0}  \frac{\pi^2 g(a)^2}{a^2} &=\lim_{a \to 0} \frac{\arccos^2 (\frac{1}{\sqrt{1+a^2}})}{a^2}\\
							& =\lim_{a \to 0} \frac{\arccos (\frac{1}{\sqrt{1+a^2}}) }{a(1+a^2)} =\lim_{a \to 0} \frac{1}{(1+3a^2)(1+a^2)}=1.
\end{align*}
A contradiction now follows by observing that
$$\lim_{a \to 0} \frac{\Cov{X}{Y}}{a^2}= \frac{1}{2\pi}- \frac{1}{\pi^2}>0,$$
but 
$$\lim_{a\to 0}\frac{C \eps^s}{a^2}=\lim_{a\to 0}\frac{C a^{2s}}{(1+a^2)^s a^2}=\lim_{a\to 0}\frac{C a^{2s-2}}{(1+a^2)^s}=0.$$
\end{rem}
We can now state our main variance bounds for subsampled Gaussian circulant matrices.
\begin{prop}\label{withoutrad} Let $I\subset [n]$, let $g$ be an $n$-dimensional standard Gaussian and set $A=R_I C_g$. 
\begin{enumerate}
\item Set $n=2k$, $p=\sum_{i=1}^k e_{2i-1}$, $q=\sum_{i=1}^k e_{2i}$ and $I\subset [n]$ arbitrary subset with $m$ elements. Then
         $$\Var\big(d_H(f_A(p), f_A(q))\big)=\frac{1}{4}.$$
         In particular, $\bP(|d_H(f_A(p),f_A(q))-d_{\Sp^{n-1}}(p,q)|\geq 1/4)\geq \frac{1}{36}$.
\item Let $I\subset [n]$ be the set of the first $m$ dyadic integers. Then for any two s-sparse vectors $p,q\in \Sp^{n-1}$   
         $$\Var \big(d_H(f_A(p),f_A(q))\big)\lesssim \frac{1}{m}+ \frac{s}{m^2}.$$
\end{enumerate}
\end{prop}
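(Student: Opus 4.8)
For part (1), the plan is to show that the special vectors $p,q$ force every sampled row to produce the \emph{same} bit comparison. Writing the $i$-th row of $C_g$ as $T^{n-i}g$ and using $\langle T^{n-i}g,x\rangle=\langle g,T^{i}x\rangle$, one has $X_i=1_{\sgn\langle g,T^{i}p\rangle\neq\sgn\langle g,T^{i}q\rangle}$. Since $(T^{i}p)_\ell=1_{\ell+i\ \text{odd}}$ and $(T^{i}q)_\ell=1_{\ell+i\ \text{even}}$, the two inner products $\langle g,T^{i}p\rangle$ and $\langle g,T^{i}q\rangle$ are, in one order or the other, the two independent Gaussian sums $S_{\text{odd}}=\sum_{\ell \text{ odd}}g_\ell$ and $S_{\text{even}}=\sum_{\ell \text{ even}}g_\ell$. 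Hence $X_i=1_{\sgn(S_{\text{odd}})\neq\sgn(S_{\text{even}})}$ for every $i$ and every choice of $I$, so $d_H(f_A(p),f_A(q))=\frac1m\sum_{i\in I}X_i$ equals this single $\{0,1\}$-valued variable, which is Bernoulli$(1/2)$ because the two signs are independent and symmetric; therefore $\Var(d_H(f_A(p),f_A(q)))=\frac14$. Finally, $\langle p,q\rangle=0$ gives $d_{\Sp^{n-1}}(p,q)=\frac12$, and as $d_H(f_A(p),f_A(q))\in\{0,1\}$ the deviation $|d_H(f_A(p),f_A(q))-d_{\Sp^{n-1}}(p,q)|$ is identically $\frac12\geq\frac14$, which a fortiori yields the stated probability bound.

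For part (2), I would work from the variance decomposition \eqref{eqn:varExpand}. The diagonal part is bounded by $\frac1{m^2}\sum_i\Var(X_i)\leq\frac1{4m}$, since each $X_i$ is an indicator. For the cross terms, write again $X_i=1_{Z(T^{i}p)\neq Z(T^{i}q)}$ with $T^{i}p,T^{i}q$ still $s$-sparse unit vectors, and apply Theorem~\ref{Cov} with $(x_1,x_2,y_1,y_2)=(T^{i}p,T^{i}q,T^{j}p,T^{j}q)$. By shift-invariance of the inner product, $\langle T^{i}r,T^{j}r'\rangle=\langle r,T^{j-i}r'\rangle$, so $|\Cov{X_i}{X_j}|\leq 8\max_{r,r'\in\{p,q\}}|\langle r,T^{j-i}r'\rangle|$; that is, the covariance is controlled by four cyclic cross-correlations evaluated at the single lag $j-i$. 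For the dyadic index set these lags are the numbers $2^{b}-2^{a}$ with $0\leq a<b\leq m-1$; by uniqueness of binary expansions they are pairwise distinct, and since $[n]$ contains at most $\lfloor\log_2 n\rfloor+1$ dyadic integers they all lie in $\{1,\dots,n-1\}$, so no collision occurs modulo $n$. Hence, for each fixed $(r,r')$, the values $j-i$ run through distinct residues, and $\sum_{t=0}^{n-1}|\langle r,T^{t}r'\rangle|\leq\sum_{t}\sum_\ell|r_\ell|\,|r'_{\ell+t}|=\|r\|_1\|r'\|_1\leq s$. Summing Theorem~\ref{Cov} over all $\binom{m}{2}$ pairs therefore gives $\frac2{m^2}\sum_{i<j}|\Cov{X_i}{X_j}|\leq\frac{16}{m^2}\sum_{r,r'\in\{p,q\}}\sum_{t=0}^{n-1}|\langle r,T^{t}r'\rangle|\leq\frac{64s}{m^2}$, and adding the diagonal bound yields $\Var(d_H(f_A(p),f_A(q)))\lesssim\frac1m+\frac s{m^2}$.

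This approach deliberately sidesteps the central difficulty: the rows of a subsampled Gaussian circulant matrix are strongly dependent through their relative shifts, and in general the correlation sequence $t\mapsto\langle r,T^{t}r'\rangle$ would have to be understood lag by lag. Dyadic subsampling is exactly what makes all the lags that actually occur pairwise distinct, so one only ever needs the trivial global bound $\sum_t|\langle r,T^{t}r'\rangle|\leq\|r\|_1\|r'\|_1$; combined with Theorem~\ref{Cov} (which already removed the orthogonality restriction present in earlier covariance bounds) this closes part (2). For part (1) there is no real obstacle beyond recognising that the chosen $p,q$ collapse all $m$ comparisons to a single coin flip; the only thing to watch is the bookkeeping that identifies $\langle g,T^{i}p\rangle$ and $\langle g,T^{i}q\rangle$ with $S_{\text{odd}}$ and $S_{\text{even}}$ for both parities of $i$, together with the observation that at most $\lfloor\log_2 n\rfloor+1$ dyadic integers fit in $[n]$, which is what keeps the lags honest modulo $n$ in part (2).
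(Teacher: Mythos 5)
Your proof is correct and follows essentially the same route as the paper: in part (1) you observe that all $m$ indicators collapse to the single Bernoulli$(1/2)$ variable $1_{\sgn(S_{\mathrm{odd}})\neq\sgn(S_{\mathrm{even}})}$, and in part (2) you combine Theorem~\ref{Cov} with the fact that the dyadic lags $2^{b}-2^{a}$ are pairwise distinct elements of $[n]$, so that the cross-correlation sum is controlled by $\|p\|_1\|q\|_1\le s$, exactly as in the paper. The only (harmless) divergence is the final tail bound in part (1): the paper invokes the Paley--Zygmund inequality to obtain the constant $1/36$, whereas your observation that $d_H(f_A(p),f_A(q))\in\{0,1\}$ while $d_{\Sp^{n-1}}(p,q)=1/2$ makes the deviation identically equal to $1/2$, which is simpler and in fact yields the stated event with probability $1$.
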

This shows that a subsampled Gaussian circulant matrix in general does not induce a binary embedding with high probability for arbitrary $p,q\in \Sp^{n-1}$. However, if $p,q$ are $m$-sparse and one subsamples dyadically, then $\Var \big(d_H(f_A(p), f_A(q))\big)$ decays optimally in terms of the number of measurements $m$.
\begin{proof}[Proof of Proposition~\ref{withoutrad}]
First, let us consider case $(1)$. As before, we let $a_i$ be the $i$-th row of $A$ and use the notation
$$X_i=1_{\sgn(\skp{a_i}{p}) \neq \sgn(\skp{a_i}{q})}.$$ 
For any $i,j\in I$
\begin{align*}
\Cov{X_i}{X_j} &=\Pb \big( \sgn(\skp{T^{n-i}g}{p})\neq \sgn(\skp{T^{n-i}g}{q}),\\
                       &\qquad \qquad \sgn(\skp{T^{n-j}g}{p}) \neq \sgn(\skp{T^{n-j}g}{q}) \big) - d_{\Sp^{n-1}}(p,q)^2\\
                       &=\Pb \Big( \sgn\Big(\sum_{s=1}^k g_{2k+2s- i -1}\Big)\neq \sgn\Big(\sum_{s=1}^k g_{2k+2s-i}\Big),\\
                       &\qquad \qquad \sgn\Big(\sum_{s=1}^k g_{2k+2s-j-1}\Big) \neq \sgn\Big(\sum_{s=1}^k g_{2k+2s-j}\Big) \Big) - d_{\Sp^{n-1}}(p,q)^2\\
                       &=\Pb \Big( \sgn\Big(\sum_{s=1}^k g_{2k+2s- i -1}\Big)\neq \sgn\Big(\sum_{s=1}^k g_{2k+2s-i}\Big)\Big) - d_{\Sp^{n-1}}(p,q)^2\\
                       &=d_{\Sp^{n-1}}(p,q)-d_{\Sp^{n-1}}(p,q)^2.
                       \end{align*}
Here we used that for any pair $i,j\in [n]$
\begin{align*}
&\Big\{\sgn\Big(\sum_{s=1}^k g_{2k+2s- i -1}\Big)\neq \sgn\Big(\sum_{s=1}^k g_{2k+2s-i}\Big)\Big\}\\
&\qquad =\Big\{ \sgn\Big(\sum_{s=1}^k g_{2k+2s-j-1}\Big) \neq \sgn\Big(\sum_{s=1}^k g_{2k+2s-j}\Big)\Big\}.
\end{align*}                   
Since $\skp{p}{q}=0$, we conclude
\begin{align*}
\Var \big(d_H(f_A(p),f_A(q))\big) &= \Var \Big( \frac{1}{m} \sum_{i\in I} X_i \Big)=\frac{1}{m^2} \sum_{i,j\in I}  \Cov{X_i}{X_j}\\
						         &=d_{\Sp^{n-1}}(p,q)-d_{\Sp^{n-1}}(p,q)^2= \frac{1}{2}-\frac{1}{4}=\frac{1}{4}.
\end{align*}
By the Paley-Zygmund inequality, this implies for any $\del\leq 1/4$
\begin{align*}
& \bP(|d_H(f_A(p),f_A(q))-d_{\Sp^{n-1}}(p,q)|\geq \del) \\
& \qquad \qquad \qquad \geq \frac{(\Var(d_H(f_A(p),f_A(q)))-\del^2)^2}{\E(d_H(f_A(p),f_A(q))-d_{\Sp^{n-1}}(p,q))^4} \geq \frac{1}{36}.
\end{align*}
Let us consider case $(2)$. Here the $i$-th row of $A$ takes the form $a_i=T^{n-2^{i-1}}g$ and
$$X_i=1_{\sgn(\skp{a_i}{p}) \neq \sgn(\skp{a_i}{q})}=1_{\sgn(\skp{T^{n-2^{i-1}}g}{p}) \neq \sgn(\skp{T^{n-2^{i-1}}g}{q})}.$$
Moreover, since the standard Gaussian distribution is invariant under permutation of coordinates
\begin{align*}
\Cov{X_i}{X_j} &=\Cov{1_{\sgn(\skp{g}{p}) \neq \sgn(\skp{g}{q})}}{1_{\sgn(\skp{g}{T^{2^{j-1}-2^{i-1}}p}) \neq \sgn(\skp{g}{T^{2^{j-1}-2^{i-1}}q})}}\\
                                                            &=\Cov{1_{Z(p) \neq Z(q)}}{1_{Z(T^{2^{j-1}-2^{i-1}}p) \neq Z(T^{2^{j-1}-2^{i-1}}q)}}.
\end{align*}
As a consequence, Theorem~\ref{Cov} implies
\begin{align*}
|\Cov{X_i}{X_j}| &\lesssim |\skp{p}{T^{2^{j-1}-2^{i-1}}p}|+  |\skp{p}{T^{2^{j-1}-2^{i-1}}q}|\\
                         &\qquad + |\skp{q}{T^{2^{j-1}-2^{i-1}}p}|+ |\skp{q}{T^{2^{j-1}-2^{i-1}}q}| \nonumber .\\
\end{align*}
Observe that
\begin{align*}
\sum_{1\leq i<j \leq m} |\skp{p}{T^{2^{j-1}-2^{i-1}}q}|\leq \sum_\ell |p_\ell| \sum_{1\leq i<j \leq m} |q_{\ell+ 2^{j-1}-2^{i-1}}|\leq \sum_\ell |p_\ell| \sqrt{s}\leq s,
\end{align*}
where in the last two inequalities we used Cauchy-Schwarz, $\eu{p}=\eu{q}=1$, the sparsity of $p,q$ and that $\{2^{j-1}-2^{i-1} \; : \; 1\leq i<j \leq m \} $ is injectively contained in $[n]$. Putting our estimates together we find
\begin{align*}
\Var \big(d_H(f_A(p),f_A(q))\big) &\leq \frac{1}{m^2} \Big( \sum_{i=1}^m \Var(X_i) + 2\sum_{1\leq i<j \leq m} |\Cov{X_i}{X_j}| \Big)\\
                                                           &\lesssim \frac{1}{m} + \frac{s}{m^2}.
\end{align*}
\end{proof}
Next, we consider a subsampled Gaussian circulant matrix with randomized column signs $A=R_IC_gD_{\eps}$. In this case,
$$X_i=1_{\sgn(\skp{a_i}{p}) \neq \sgn(\skp{a_i}{q})}=1_{\sgn(\skp{(T^{n-i}g) \od \eps}{p}) \neq \sgn(\skp{(T^{n-i}g) \od \eps}{q})},$$
where $\od$ denotes pointwise multiplication of vectors. 
\begin{lemma}\label{radcov} Let $p, q \in \Sp^{n-1}$. For any $i\neq j$, $|j-i|\neq \frac{n}{2}$
$$| \Cov{X_i}{X_j}|\leq 8 \big(  \eu{p\od T^{j-i}p}+ \eu{p\od T^{j-i}q}+ \eu{q\od T^{j-i}p}+ \eu{q\od T^{j-i}q} \big).$$
\end{lemma}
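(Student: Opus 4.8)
The plan is to condition on the Rademacher vector $\eps$, apply Theorem~\ref{Cov} for each fixed realization, and then average over $\eps$; the hypothesis $|j-i|\neq n/2$ is exactly what makes the resulting expectation controllable by an elementary second-moment estimate.

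First I would rewrite $X_i$ in a form to which Theorem~\ref{Cov} applies. Using $\skp{T^{n-i}g}{v}=\skp{g}{T^iv}$ and $T^i(\eps\od p)=(T^i\eps)\od(T^ip)$, conditionally on $\eps$ we have $X_i=1_{Z(u_1^{(i)})\neq Z(u_2^{(i)})}$ with $u_1^{(i)}=(T^i\eps)\od(T^ip)$ and $u_2^{(i)}=(T^i\eps)\od(T^iq)$; since $|\eps_k|=1$ these lie on $\Sp^{n-1}$, and $\skp{u_1^{(i)}}{u_2^{(i)}}=\skp{T^ip}{T^iq}=\skp{p}{q}$. Consequently $\E[X_i\mid\eps]=d_{\Sp^{n-1}}(p,q)$ is deterministic, so the law of total covariance gives $\Cov{X_i}{X_j}=\E_\eps\big[\Cov{X_i}{X_j\mid\eps}\big]$, the mean cross term dropping out. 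Applying Theorem~\ref{Cov} to $u_1^{(i)},u_2^{(i)},u_1^{(j)},u_2^{(j)}$, replacing the maximum over $a,b\in\{1,2\}$ by the sum of the four numbers $|\skp{u_a^{(i)}}{u_b^{(j)}}|$, then taking $\E_\eps$ and using Jensen's inequality $\E_\eps|Y|\le(\E_\eps Y^2)^{1/2}$, I am left with having to evaluate $\E_\eps\skp{u_a^{(i)}}{u_b^{(j)}}^2$ for $a,b\in\{1,2\}$, where $(r_1,r_2):=(p,q)$.

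The second-moment computation runs as follows. With $d:=j-i$, reindexing the sum one gets $\skp{u_a^{(i)}}{u_b^{(j)}}=\sum_m\eps_m\eps_{m+d}(r_a)_m(r_b)_{m+d}$, hence
$$\E_\eps\skp{u_a^{(i)}}{u_b^{(j)}}^2=\sum_{m,m'}\E\big[\eps_m\eps_{m+d}\eps_{m'}\eps_{m'+d}\big]\,(r_a)_m(r_b)_{m+d}(r_a)_{m'}(r_b)_{m'+d}.$$
For $m\neq m'$ the four indices $m,m+d,m',m'+d$ cannot all have even multiplicity unless $\{m,m+d\}=\{m',m'+d\}$, which forces $2d\equiv0\pmod n$, i.e.\ $d\equiv0$ or $d\equiv n/2$; both are excluded by $i\neq j$ and $|j-i|\neq n/2$. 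So only the diagonal $m=m'$ survives and the sum equals $\sum_m(r_a)_m^2(r_b)_{m+d}^2=\eu{r_a\od T^dr_b}^2$. Plugging $\E_\eps|\skp{u_a^{(i)}}{u_b^{(j)}}|\le\eu{r_a\od T^{j-i}r_b}$ back in yields the claimed bound $|\Cov{X_i}{X_j}|\le 8(\eu{p\od T^{j-i}p}+\eu{p\od T^{j-i}q}+\eu{q\od T^{j-i}p}+\eu{q\od T^{j-i}q})$.

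The main obstacle is this parity bookkeeping: one must verify that $m=m'$ is the only pairing contributing to $\E[\eps_m\eps_{m+d}\eps_{m'}\eps_{m'+d}]$, and it is precisely here — and only here — that $|j-i|\neq n/2$ is used, since for $d=n/2$ the off-diagonal choice $m'=m+n/2$ reproduces $\{m,m+d\}$ and contributes an extra piece not dominated by $\eu{r_a\od T^dr_b}$. Everything else (the rewriting of $X_i$, the vanishing of the conditional-mean cross term, Theorem~\ref{Cov}, and Jensen) is routine.
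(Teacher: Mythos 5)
Your proposal is correct and follows essentially the same route as the paper: condition on $\eps$ (noting the conditional means are deterministic so only the conditional covariance survives), apply Theorem~\ref{Cov}, pass to second moments via Jensen, and use the Rademacher parity computation in which the hypothesis $|j-i|\neq n/2$ rules out the one off-diagonal pairing. The only cosmetic difference is that you keep shifts on both indices where the paper normalizes to a single shift by $j-i$, which changes nothing.
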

\begin{proof} We may assume $i<j$. Using the invariance of the standard Gaussian distribution under coordinate permutations
\begin{align*}
 &\Cov{X_i}{X_j}\\
  &= \Cov{1_{\sgn(\skp{T^{n-i}g}{\eps\od p}) \neq \sgn(\skp{T^{n-i}g}{\eps\od q})}}{1_{\sgn(\skp{T^{n-j}g}{\eps\od p}) \neq \sgn(\skp{T^{n-j}g}{\eps\od q})}}\\
 	                &= \Cov{1_{Z(\eps\od p) \neq Z(\eps\od q)}}{1_{Z(T^{j-i}(\eps\od p)) \neq Z(T^{j-i}(\eps\od q))}}.
\end{align*}
Since $g$ and $\eps$ are independent
 \begin{align*}
| \Cov{X_i}{X_j}| &=|\E_{\eps}\Cov{1_{Z(\eps\od p) \neq Z(\eps\od q)}}{1_{Z(T^{j-i}(\eps\od p)) \neq Z(T^{j-i}(\eps\od q))}}|\\
     		         &\leq \E_{\eps} |\Cov{1_{Z(\eps\od p) \neq Z(\eps\od q)}}{1_{Z(T^{j-i}(\eps\od p)) \neq Z(T^{j-i}(\eps\od q))}}|.
\end{align*}    		         
Applying Theorem~\ref{Cov} and Jensen's inequality we find		
\begin{align*}		         
| \Cov{X_i}{X_j}|&\leq 8\; \E_{\eps} \big(  |\skp{\eps\od p}{T^{j-i}(\eps\od p)}|+  |\skp{\eps\od p}{T^{j-i}(\eps\od q)}|\\
		         &\quad + |\skp{\eps\od q}{T^{j-i}(\eps\od p)}|+ |\skp{\eps\od q}{T^{j-i}(\eps\od q)}|\big)\\
                          &\leq 8 \Big( \sqrt{ \E \big( |\skp{\eps\od p}{T^{j-i}(\eps\od p)}|^2 \big) } +  \sqrt{ \E\big( |\skp{\eps\od p}{T^{j-i}(\eps\od q)}|^2 \big) }\\  
                          &\quad + \sqrt{ \E\big(|\skp{\eps\od q}{T^{j-i}(\eps\od p)}|^2 \big) } + \sqrt{ \E \big(  |\skp{\eps\od q}{T^{j-i}(\eps\od q)}|^2 \big) } \Big).
\end{align*}
For $s,t\in [n]$, $s\neq t$ and $k\in [n-1]$ 
$$\E (\eps_s \eps_{s+k} \eps_t \eps_{t+k})=
\begin{cases}
1, \quad \text{ if } s=t\; \text{ or } \;s=t+k, \ t=s+k \\
0, \quad \text{ else}.
\end{cases}$$
Observe that $s=t+k, \ t=s+k$ implies $k=\frac{n}{2}$. Set $k=j-i\in [n-1]$, then $k\neq \frac{n}{2}$ and therefore
\begin{align*}
\E |\skp{\eps\od p}{T^{k}(\eps\od q)}|^2 &= \sum_{s,t=1}^n p_s q_{s+k} p_{t} q_{t+k} \E (\eps_s \eps_{s+k} \eps_t \eps_{t+k} )\\
                                                              &= \sum_{s=1}^n p_s^2 q_{s+k}^2 =\eu{p\od T^kq}^2.
\end{align*}
\end{proof}
\begin{thm}\label{varbound} Let $g$ be an $n$-dimensional standard Gaussian vector and let $\eps$ be an independent vector of independent random signs. Let $A=R_{I} C_g D_{\eps}$.
\begin{enumerate}
\item If $I=[m]$, then for any $p,q\in \Sp^{n-1}$ 
\begin{equation}
         \label{eqn:varboundMainDet}
         \Var \big(d_H(f_A(p),f_A(q))\big)\lesssim \frac{1}{\sqrt{m}}.
\end{equation}
\item If $I\subset [n]$ is chosen uniformly at random from all subsets of size $m$, then for any $p,q\in \Sp^{n-1}$
         \begin{equation}
         \label{eqn:varboundMainRandom}
         \Var \big(d_H(f_A(p),f_A(q))\big)\lesssim \frac{1}{m} + \frac{1}{\sqrt{n}}.
         \end{equation}
\end{enumerate}
\end{thm}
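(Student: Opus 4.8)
The plan is to feed the covariance estimate of Lemma~\ref{radcov} into the variance decomposition (\ref{eqn:varExpand}) and close the argument with a Cauchy--Schwarz step. Since every $X_i$ takes values in $\{0,1\}$, we have $\Var(X_i)\le 1/4$, so the diagonal contribution $\frac{1}{m^2}\sum_{i=1}^m\Var(X_i)$ to (\ref{eqn:varExpand}) is at most $1/(4m)$ in both cases; it therefore remains to control the cross terms.

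For part (1), $I=[m]$, so every pair is of the form $(i,j)$ with $1\le i<j\le m$, and Lemma~\ref{radcov} bounds $|\Cov{X_i}{X_j}|$ (when $|j-i|\neq n/2$) by $8$ times the sum of the four quantities $\eu{p\od T^{j-i}p}$, $\eu{p\od T^{j-i}q}$, $\eu{q\od T^{j-i}p}$, $\eu{q\od T^{j-i}q}$. Grouping the pairs by the shift $k=j-i\in\{1,\dots,m-1\}$, with $m-k\le m$ pairs per value of $k$, reduces everything to estimating $\sum_{k=1}^{m-1}\eu{p\od T^{k}q}$ and its three siblings. Here Cauchy--Schwarz gives $\sum_{k=1}^{m-1}\eu{p\od T^{k}q}\le\sqrt{m-1}\cdot\big(\sum_{k=1}^{m-1}\eu{p\od T^{k}q}^2\big)^{1/2}$, and since the shifts $k=1,\dots,m-1$ are distinct modulo $n$ (because $m\le n$), the inner sum $\sum_{k=1}^{m-1}q_{s+k}^2$ is at most $\eu{q}^2=1$, whence
\[
\sum_{k=1}^{m-1}\eu{p\od T^{k}q}^2=\sum_{s}p_s^2\sum_{k=1}^{m-1}q_{s+k}^2\le\sum_{s}p_s^2=1,
\]
so $\sum_{k=1}^{m-1}\eu{p\od T^{k}q}\le\sqrt m$. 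Consequently $\frac{2}{m^2}\sum_{1\le i<j\le m}|\Cov{X_i}{X_j}|\lesssim \frac{1}{m^2}\cdot m\cdot\sqrt m=\frac{1}{\sqrt m}$, the at most $m$ pairs with $|j-i|=n/2$ (each contributing at most $\tfrac14$) being absorbed into a harmless $O(1/m)$. This gives (\ref{eqn:varboundMainDet}).

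For part (2), observe first that by the rowwise reflection/permutation invariance behind (\ref{eqn:unbEst}) one has $\E\big[d_H(f_A(p),f_A(q))\mid I\big]=d_{\Sp^{n-1}}(p,q)$ for \emph{every} fixed $I$, so the law of total variance collapses to $\Var\big(d_H(f_A(p),f_A(q))\big)=\E_I\big[\Var\big(d_H(f_A(p),f_A(q))\mid I\big)\big]$; hence we may take $\E_I$ inside (\ref{eqn:varExpand}). For a uniformly random $m$-subset $I$ of $[n]$ one has $\Pb(i\in I)=m/n$ and $\Pb(i,j\in I)=\tfrac{m(m-1)}{n(n-1)}\le m^2/n^2$, so the diagonal term is $\le\frac{1}{m^2}\cdot\frac{m}{n}\cdot n\cdot\frac14=\frac{1}{4m}$, while in the cross terms inserting $\Pb(i,j\in I)\le m^2/n^2$ cancels the $1/m^2$ prefactor and leaves $\frac{2}{n^2}\sum_{1\le i<j\le n}|\Cov{X_i}{X_j}|$. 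Running the part-(1) argument but now over all shifts $k\in\{1,\dots,n-1\}$ yields $\sum_{k=1}^{n-1}\eu{p\od T^{k}q}\le\sqrt{n-1}$, hence $\frac{2}{n^2}\sum_{1\le i<j\le n}|\Cov{X_i}{X_j}|\lesssim\frac{1}{n^2}\cdot n\cdot\sqrt n=\frac{1}{\sqrt n}$, with the $|j-i|=n/2$ pairs again $O(1/n)$. Adding the two pieces gives $\Var\big(d_H(f_A(p),f_A(q))\big)\lesssim \frac{1}{m}+\frac{1}{\sqrt n}$, which is (\ref{eqn:varboundMainRandom}).

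The proof is essentially routine given Lemma~\ref{radcov}; the points to be careful about are (i) correctly translating the dependence between rows $i$ and $j$ into the single shift parameter $k=j-i$ and counting the $m-k$ (resp.\ $n-k$) pairs with that shift, (ii) the Cauchy--Schwarz step, whose success rests entirely on the normalization $\sum_k q_{s+k}^2\le\eu{q}^2=1$, (iii) the bookkeeping for the excluded shift $k=n/2$, and, in part (2), (iv) verifying that $d_H(f_A(p),f_A(q))$ is unbiased for $d_{\Sp^{n-1}}(p,q)$ conditionally on $I$, so that the reduction to $\E_I[\Var(\cdot\mid I)]$ is legitimate. None of these is a genuine obstacle; the only reason the bound is not simply $1/m$ is that, for random $I$, \emph{all} shifts $k=1,\dots,n-1$ can occur, which forces the $1/\sqrt n$ loss through Cauchy--Schwarz.
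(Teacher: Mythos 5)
Your proof is correct and follows essentially the same route as the paper: Lemma~\ref{radcov} plus grouping pairs by the shift $k=j-i$, Cauchy--Schwarz against $\sum_k\eu{p\od T^kq}^2=1$, and a trivial bound on the excluded shift $k=n/2$. The only cosmetic difference is in part (2), where you condition on $I$ and apply the law of total variance (legitimate, since the conditional mean is $d_{\Sp^{n-1}}(p,q)$ for every fixed $I$), whereas the paper works with selector variables $\theta_i$ and their negative correlation; both reduce to the same bound $\frac{1}{m}+\frac{2}{n^2}\sum_{i<j\le n}|\Cov{X_i}{X_j}|\lesssim\frac{1}{m}+\frac{1}{\sqrt n}$.
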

In \cite[Theorem 3]{YBK15} it was shown that if $I=[m]$, then 
$$\Var \big(d_H(f_A(p),f_A(q))\big)\lesssim \frac{1}{m} + \rho$$ 
where $\rho=\max\{\n{p}_{\infty},  \n{q}_{\infty} \}$. This bound is non-trivial if both $p$ and $q$ are well-spread. Note that $\frac{1}{\sqrt{n}}\leq \rho$, and equality is achieved if and only
if $p$ and $q$ are perfectly spread, i.e. $p_i=q_i=\pm \frac{1}{\sqrt{n}}$ for all $i\in [n]$. Our variance bound in (2) is always better and shows that a well-spreadness assumption is not necessary if one subsamples uniformly at random. As was discussed at the beginning of this section the estimate (\ref{eqn:varboundMainRandom}) is optimal for $m\leq \sqrt{n}$. We conjecture that it is possible to remove the $1/\sqrt{n}$ factor altogether.  
\begin{proof}
We start by proving $(1)$. Since $\E X_i =d_{\Sp^{n-1}}(p,q)$, it follows that $\Var X_i =d_{\Sp^{n-1}}(p,q)-d_{\Sp^{n-1}}(p,q)^2$ and therefore (see (\ref{eqn:varExpand})) 
\begin{align*}
\Var \big(d_H(f_A(p),f_A(q))\big)=\frac{d_{\Sp^{n-1}}(p,q)- d_{\Sp^{n-1}}(p,q)^2}{m} + \frac{2}{m^2} \sum_{1\leq i<j \leq m} \Cov{X_i}{X_j}.\\
\end{align*}
By Lemma~\ref{radcov} for $i<j$ with $j-i\neq \frac{n}{2}$
\begin{align*}
|\Cov{X_i}{X_j}| \lesssim  \eu{p\od T^{j-i}p}+ \eu{p\od T^{j-i}q}+ \eu{q\od T^{j-i}p}+ \eu{q\od T^{j-i}q}.
\end{align*}
If $j-i =\frac{n}{2}$ we use the trivial bound $|\Cov{X_i}{X_j}|\leq 1$. Combining these inequalities with
$$|\{1\leq i<j \leq m\;: \; j-i=k \}|=m-k,$$
we find
\begin{align}\label{covsum}
&\sum_{1\leq i<j \leq m} |\Cov{X_i}{X_j}| \\
& \ \ =\sum_{k=1}^{m-1} \sum_{\{i<j\;:\; j-i=k\}}  |\Cov{X_i}{X_j}| \nonumber\\
& \ \ \lesssim \sum_{k=1}^{m-1} (m-k)  \big( \eu{p\od T^{k}p}+ \eu{p\od T^{k}q}+ \eu{q\od T^{k}p}+ \eu{q\od T^{k}q}\big) \nonumber\\
& \ \ \qquad +  \Big(m- \frac{n}{2}\Big) 1_{\{m-1\geq \frac{n}{2}\}} \nonumber\\
& \ \ \leq  m \sum_{k=1}^{m}   \big( \eu{p\od T^{k}p}+ \eu{p\od T^{k}q}+ \eu{q\od T^{k}p}+ \eu{q\od T^{k}q}\big) + m \nonumber\\
& \ \ \lesssim m^{3/2}\nonumber.
\end{align}
For the last inequality we have used Cauchy-Schwarz and 
$$\sum_{k=1}^n \eu{p\od T^kq}^2=\sum_{k=1}^n\sum_{i=1}^n p_i^2 q_{i+k}^2=\sum_{i=1}^n p_i^2\sum_{k=1}^n q_{i+k}^2=1\; \text{ for all } p,q\in \Sp^{n-1}.$$
We now prove (2). Let $I$ be chosen uniformly at random and let $\theta_i\in \{0,1\}$ be the induced selector variables, i.e., $\theta_i=1$ if and only if $i\in I$. Note that
\begin{align*}
 \Var \big(d_H(f_A(p),f_A(q))\big)&=\Var \Big( \frac{1}{m}\sum_{i\in I} 1_{\sgn(\skp{a_i}{p}) \neq \sgn(\skp{a_i}{q})}  \Big) \\
   							&=\frac{1}{m^2} \sum_{i=1}^n \Var(\theta_i X_i) + \frac{2}{m^2}  \sum_{1\leq i<j \leq n} \Cov{\theta_i X_i}{\theta_j X_j}.
\end{align*}							
Clearly, $\Var(\theta_iX_i)= \frac{m}{n}d_{\Sp^{n-1}}(p,q)-\frac{m^2}{n^2} d_{\Sp^{n-1}}(p,q)^2$ and for $i\neq j$, 
\begin{align*}
\Cov{\theta_i X_i}{\theta_j X_j} &= \E(\theta_i \theta_j) \E (X_iX_j)- \frac{m^2}{n^2} \E X_i \E X_j\\
		       &\leq  \frac{m^2}{n^2} \E (X_iX_j)- \frac{m^2}{n^2} \E X_i \E X_j= \frac{m^2}{n^2} \Cov{X_i}{X_j},
\end{align*}
as $\theta_i$ and $\theta_j$ are negatively correlated. Combining these estimates, we find
$$\Var \big(d_H(f_A(p),f_A(q))\big) \leq \frac{d_{\Sp^{n-1}}(p,q)}{m} - \frac{d_{\Sp^{n-1}}(p,q)^2}{n} + \frac{2}{n^2} \sum_{1\leq i<j \leq n} \Cov{X_i}{X_j}.$$
Since \eqref{covsum} holds for any $m\leq n$, we obtain
$$ \sum_{1\leq i<j \leq n} |\Cov{X_i}{X_j}|\lesssim n^{3/2}.$$
\end{proof}

\section{Fast binary embeddings}
\label{sec:binaryEmbedding}

In the following, we are interested to construct a $\del$-binary embedding for an arbitrary finite dataset $\mathcal{D}\subset \R^n$, which achieves both optimal bit complexity $\del^{-2}\log(|\mathcal{D}|)$
and runs in almost linear time $n\log(n)$. We consider modifications of the two binary embeddings from \cite{YCP15}, which were discussed in the introduction.\par 
A common step in all the binary embeddings we consider is to first reduce the dimensionality of the data set, while approximately preserving the geodesic distances between the points. The following observation, from the proof of \cite[Lemma 3.6]{YCP15}, says that this can be achieved using a map that preserves the norms of and Euclidean distances between the data points up to a multiplicative error. Let us say that a matrix $A\in \R^{n'\ti n}$ is a \emph{$\del$-isometry} on a set $\cP\subset \R^n$ if 
\begin{equation}
\label{eqn:EucIsom}
(1-\del)\|z\|_2\leq \|Az\|_2\leq (1+\del)\|z\|_2 \qquad \operatorname{for \ all} \ z\in \cP,
\end{equation}
that is, $A$ preserves norms of vectors up to a multiplicative error $\del$. 
\begin{lemma}
\label{lem:JLgeodesic}
\cite{YCP15} Let $\cD\subset \Sp^{n-1}$ be symmetric and suppose that $A$ is a $\del$-isometry on both $\cD$ and $\cD-\cD$. Then,
$$|d_{\Sp^{n'-1}}(Ax,Ay) - d_{\Sp^{n-1}}(x,y)| \leq \del \qquad \operatorname{for \ all} \ x,y\in \cD.$$
\end{lemma}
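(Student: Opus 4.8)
The plan is to pass to angles and exploit polarization. Write $\theta := \arccos\skp{x}{y}\in[0,\pi]$, so $d_{\Sp^{n-1}}(x,y)=\theta/\pi$, and $\phi := \arccos\big(\skp{Ax}{Ay}/(\eu{Ax}\,\eu{Ay})\big)$, so $d_{\Sp^{n'-1}}(Ax,Ay)=\phi/\pi$; the claim becomes $|\phi-\theta|\le\pi\del$. First I would reduce to the case $\skp{x}{y}\ge 0$, i.e.\ $\theta\le\pi/2$: since $\cD$ is symmetric, $-y\in\cD$, and because $A(-y)=-Ay$ while $d_{\Sp}(u,-v)=1-d_{\Sp}(u,v)$, replacing $y$ by $-y$ shifts both geodesic distances by exactly $1$ and leaves their difference unchanged. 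This normalization is worth making because $\arccos$ behaves badly near $-1$; afterwards only the endpoint $\theta=0$ remains a concern.

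The engine is the polarization identity together with the parallelogram law. Writing $r_1=\eu{Ax}$, $r_2=\eu{Ay}$, one has $\eu{Ax-Ay}^2=(r_1-r_2)^2+4r_1r_2\sin^2(\phi/2)$ and $\eu{Ax+Ay}^2=(r_1-r_2)^2+4r_1r_2\cos^2(\phi/2)$, whence
$$\cos\phi=\frac{\eu{A(x+y)}^2-\eu{A(x-y)}^2}{4r_1r_2},\qquad \cos\theta=\frac{\eu{x+y}^2-\eu{x-y}^2}{4},$$
and in addition $\eu{A(x+y)}^2+\eu{A(x-y)}^2=2r_1^2+2r_2^2$. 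Here the symmetry of $\cD$ is used a second time: both $x-y$ and $x+y=x-(-y)$ lie in $\cD-\cD$, so the $\del$-isometry hypothesis controls $\eu{A(x\pm y)}$ as well as $r_1,r_2$. Substituting $\eu{A(x\pm y)}^2=\eu{x\pm y}^2+E_\pm$ with $|E_\pm|\le(2\del+\del^2)\eu{x\pm y}^2$, eliminating $r_1^2+r_2^2$ via the parallelogram law and collecting terms, one computes
$$\cos\phi-\cos\theta=\frac{\tfrac12\big((1-\cos\theta)E_+-(1+\cos\theta)E_-\big)+(r_1-r_2)^2\cos\theta}{2\,r_1r_2}.$$
The reason for working with $x\pm y$ rather than with $x-y$ alone is precisely that the contribution of the norm distortions of $x$ and $y$ cancels out; and since $(1-\cos\theta)\eu{x+y}^2=(1+\cos\theta)\eu{x-y}^2=2\sin^2\theta$ and $(r_1-r_2)^2\le4\del^2$, one is left with the key estimate $|\cos\phi-\cos\theta|\lesssim\del\sin^2\theta+\del^2$. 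The $\sin^2\theta$ factor is what prevents the bound from exploding near the endpoints.

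The final step converts the cosine estimate into the angle estimate. When $\theta$ is bounded away from $0$ (so, by the estimate, $\phi$ is close to $\theta$ and also bounded away from $0$ and from $\pi$), $\sin$ is bounded below on the segment joining $\phi$ and $\theta$, and the mean value theorem turns $|\cos\phi-\cos\theta|\lesssim\del\sin^2\theta$ into $|\phi-\theta|\lesssim\del\sin\theta$; combined with a lower bound $\sin\phi\gtrsim\sin\theta$ (read off from the half-angle identities and the $\del$-isometry bounds once $\eu{x-y}\gtrsim\del$), this closes the estimate in this regime. When $\theta=O(\del)$ I would argue directly instead: then $d_{\Sp^{n-1}}(x,y)=\theta/\pi$ is itself $O(\del)$, and the half-angle identity together with the isometry bound $\sin^2(\phi/2)=\big(\eu{A(x-y)}^2-(r_1-r_2)^2\big)/(4r_1r_2)\le(1+\del)^2\eu{x-y}^2/(4(1-\del)^2)$ forces $\phi\le2\arcsin\!\big(\tfrac{1+\del}{1-\del}\sin(\theta/2)\big)$, so $d_{\Sp^{n'-1}}(Ax,Ay)=\phi/\pi$ is $O(\del)$ as well and the two quantities are within $O(\del)$ of each other (one can also invoke Lemma~\ref{geo} here to compare both distances to $\eu{x-y}$).

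The main obstacle is the term $(r_1-r_2)^2$, which reflects that $A$ need not map $x$ and $y$ to vectors of exactly equal norm: it is what blocks a matching lower bound on $\phi$ in terms of $\theta$ when the two points are very close, forcing the separate treatment of the small-$\theta$ regime, and it is the principal source of awkward constants. Pushing the final constant down to exactly $\del$ (rather than an absolute multiple of it) requires carrying the bookkeeping through with care; equivalently, one proves the clean statement that a $(c\del)$-isometry on $\cD$ and $\cD-\cD$ distorts all geodesic distances by at most $\del$.
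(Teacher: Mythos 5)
The paper does not actually prove this lemma: it is imported verbatim from the proof of \cite[Lemma 3.6]{YCP15}, so there is no in-paper argument to compare against. Judged on its own, your proposal is a correct and essentially complete derivation. The reduction to $\skp{x}{y}\ge 0$ via the symmetry of $\cD$ is sound (replacing $y$ by $-y$ flips the sign of the difference of the two geodesic distances and keeps both $x-y$ and $x+y$ inside $\cD-\cD$), your polarization/parallelogram identity for $\cos\phi-\cos\theta$ checks out (I verified the numerator $\tfrac12((1-\cos\theta)E_+-(1+\cos\theta)E_-)+(r_1-r_2)^2\cos\theta$ over $2r_1r_2$), and the identities $(1-\cos\theta)\eu{x+y}^2=(1+\cos\theta)\eu{x-y}^2=2\sin^2\theta$ do yield the crucial bound $|\cos\phi-\cos\theta|\lesssim \del\sin^2\theta+\del^2$, which is exactly what lets the mean-value-theorem step survive near $\theta=0$; the companion bounds $\sin(\phi/2)\gtrsim\sin(\theta/2)$ and $\cos(\phi/2)\gtrsim 1$ needed for $\sin\phi\gtrsim\sin\theta$ follow from the same half-angle identities once $\sin(\theta/2)\gtrsim\del$, and the small-$\theta$ case via $\sin(\phi/2)\le\tfrac{1+\del}{1-\del}\sin(\theta/2)$ is also correct. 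The one point to be explicit about is the constant: your argument delivers $|\phi-\theta|\le C\del$ for an absolute constant $C$, whereas the lemma as literally stated asserts $|\phi-\theta|\le\pi\del$. You flag this yourself, and it is harmless in context --- the lemma is only ever invoked in this paper under hypotheses of the form $n'\gtrsim\del^{-2}\log(N/\eta)$, so one simply applies the statement with a $(c\del)$-isometry as you suggest --- but a proof of the lemma exactly as stated would require either carrying the bookkeeping through to the constant $\pi$ or restating the conclusion with $C\del$.
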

A substantial amount of research has been devoted to the construction of random matrices that satisfy (\ref{eqn:EucIsom}) with high probability. We will consider two families of constructions that allow for fast matrix-vector multiplication. The family of \emph{fast Johnson-Lindenstrauss transforms} \cite{AiC09, AiL13,KrW11,CGV13} rely on the fast Fourier transform (FFT) for fast multiplication. One particular construction from this family is the $n'\ti n$ random matrix
$$\Phi_{\operatorname{FJL}} = \sqrt{\frac{n}{n'}} R_I H D_{\eps},$$
where $I$ is a subset of $n'$ indices selected uniformly at random from $[n]$, $\eps$ is a Rademacher vector independent of $I$ and $H$ is the Hadamard transform. Using the FFT, $\Phi_{\operatorname{FJL}}x$ can be computed in time $\cO(n\log n)$. By combining the results of \cite{KrW11} and \cite{CGV13} it follows that $\Phi_{\operatorname{FJL}}$ satisfies the conditions of Lemma~\ref{lem:JLgeodesic} on a set $\cD$ of $N$ points in $\Sp^{n-1}$ with probability at least $1-\eta$ if 
$$n'\gtrsim  \del^{-2} \log(N/\eta) (\log^3(\log(N/\eta))\log(n) + \log(1/\eta)).$$ 
A different family of `fast' constructions can be obtained by sparsifying a random sign matrix. To be more precise, let $\si_{ij}$, $i\in [n']$, $j\in [n]$ be independent Rademacher random variables. We consider $\{0,1\}$-valued random variables $\del_{ij}$, which are independent of the $\si_{ij}$, with the following properties:
\begin{itemize}
\item For a fixed column $j$ the $\del_{ij}$ are negatively correlated, i.e. 
\begin{equation*}
\forall 1\leq i_1<i_2<\ldots<i_k\leq n',\  \E\Big(\prod_{t=1}^k\del_{i_t,j}\Big)\leq \prod_{t=1}^k\E \del_{i_t,j} = \Big(\frac{s}{n'}\Big)^k;
\end{equation*}
\item For any fixed column $j$ there are exactly $s$ nonzero $\del_{ij}$, i.e., $\sum_{i=1}^{n'} \del_{ij}=s$;
\item The vectors $(\delta_{ij})_{i=1}^{n'}$ are independent across different columns $1\le j\le n$.
\end{itemize}
The $n'\ti n$ \emph{sparse Johnson-Lindenstrauss transform $\Phi_{\operatorname{SJL}}$ with column sparsity $s$} \cite{DKS10,BOR10,KaN14} is defined by 
\begin{equation*}
(\Phi_{\operatorname{SJL}})_{ij}=\frac{1}{\sqrt{s}}\si_{ij}\del_{ij}.
\end{equation*}
One possible concrete implementation \cite{KaN14} is to take the columns independent, and in each column we choose exactly $s$ locations uniformly at random, without replacement, to specify the $\delta_{ij}$. For any $x\in \R^n$ with $\|x\|_0$ non-zero entries, $\Phi_{\operatorname{SJL}}x$ can be computed in time $\cO(s\|x\|_0)$. It follows from \cite{KaN14} that $\Phi_{\operatorname{SJL}}$ satisfies the conditions of Lemma~\ref{lem:JLgeodesic} on a set $\cD$ of $N$ points in $\Sp^{n-1}$ with probability at least $1-\eta$ if 
$$n'\gtrsim \del^{-2} \log(N/\eta), \qquad s\gtrsim \del^{-1} \log(N/\eta).$$
The following result under condition (i) was already obtained in \cite[Algorithm 3]{YCP15}. Note that both embeddings in Proposition~\ref{alternative} achieve the optimal bit complexity.
\begin{prop}[Accelerated Gaussian binary embeddings]
\label{alternative}
Let $\mathcal{D}=\{x_1,...,x_N\}\subset \Sp^{n-1}$. Let $G\in \R^{m\times n'}$ be a standard Gaussian matrix, i.e., its entries are independent standard Gaussians. Set
$$m\gtrsim \del^{-2}\log(N/\eta).$$ 
Suppose that one of the two conditions hold:
\begin{enumerate}
\item[(i)] $\Phi=\Phi_{\operatorname{FJL}}$ is an $n'\ti n$ FJLT with
$$n'\gtrsim  \del^{-2} \log(N/\eta) (\log^3(\log(N/\eta))\log(n) + \log(1/\eta)).$$
\item[(ii)]  $\Phi=\Phi_{\operatorname{SJL}}$ is an $n'\ti n$ SJLT with
$$n'\gtrsim \del^{-2} \log(N/\eta), \qquad s\gtrsim \del^{-1} \log(N/\eta).$$
\end{enumerate}
Set $A=G \Phi$. Then, with probability at least $1-\eta$, $f_A$ is a $\del$-binary embedding of $\mathcal{D}$ into $(\{-1,1\}^m, d_H)$, i.e.,
\begin{equation*}
\sup_{i,j\in [N]}|d_{H}(f_A(x_i), f_A(x_j)) - d_{\Sp^{n-1}}(x_i,x_j)|\leq \del.
\end{equation*} 
\end{prop}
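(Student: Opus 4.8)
\medskip

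The plan is to decompose the map $A = G\Phi$ into its two stages and control each separately, combining a dimension-reduction guarantee for $\Phi$ with the standard Gaussian binary embedding bound for $G$. First I would invoke Lemma~\ref{lem:JLgeodesic}: after symmetrizing $\cD$ (replace $\cD$ by $\cD\cup(-\cD)$, which at most doubles $N$ and only changes the constants), it suffices to know that $\Phi$ is a $\del$-isometry on both $\cD$ and $\cD-\cD$. Under condition (i) this is exactly the cited consequence of \cite{KrW11,CGV13} for $\Phi_{\operatorname{FJL}}$ with the stated choice of $n'$; under condition (ii) it is the cited consequence of \cite{KaN14} for $\Phi_{\operatorname{SJL}}$ with the stated $n'$ and column sparsity $s$. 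In either case, on an event $E_1$ of probability at least $1-\eta/2$ we get
$$|d_{\Sp^{n'-1}}(\Phi x_i,\Phi x_j) - d_{\Sp^{n-1}}(x_i,x_j)| \leq \del/2 \qquad \text{for all } i,j\in[N],$$
after adjusting the absolute constants in the lower bounds on $n'$ and $s$ so that the isometry distortion is $\del/2$ rather than $\del$.

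\medskip

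Next I would condition on $\Phi$ (hence on $E_1$) and apply the Gaussian binary embedding to the points $y_i := \Phi x_i/\|\Phi x_i\|_2 \in \Sp^{n'-1}$, $i\in[N]$. Since $G$ is independent of $\Phi$, conditionally on $\Phi$ the matrix $G$ is still a standard Gaussian matrix, and $\sgn(Gx_i) = \sgn(G\Phi x_i) = \sgn(Gy_i)$ because positive scaling does not change the sign pattern. By \eqref{eqn:unbEst}, $\E\, d_H(\sgn(Gy_i),\sgn(Gy_j)) = d_{\Sp^{n'-1}}(y_i,y_j)$, and since the rows of $G$ are independent, $m\, d_H(\sgn(Gy_i),\sgn(Gy_j))$ is a sum of $m$ i.i.d.\ Bernoulli variables; a Hoeffding/Bernstein bound gives, for each fixed pair,
$$\Pb\big(|d_H(\sgn(Gy_i),\sgn(Gy_j)) - d_{\Sp^{n'-1}}(y_i,y_j)| > \del/2 \;\big|\; \Phi\big) \leq 2\exp(-c\,m\,\del^2).$$
A union bound over the at most $N^2$ pairs shows that with $m \gtrsim \del^{-2}\log(N/\eta)$ (again adjusting the constant), this event $E_2$ has conditional probability at least $1-\eta/2$; since the bound is uniform in $\Phi$, it holds unconditionally on $E_2$ with $\Pb(E_2)\geq 1-\eta/2$.

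\medskip

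Finally I would combine the two estimates on $E_1\cap E_2$, which by a union bound has probability at least $1-\eta$: for all $i,j\in[N]$,
$$|d_H(f_A(x_i),f_A(x_j)) - d_{\Sp^{n-1}}(x_i,x_j)| \leq |d_H(\sgn(Gy_i),\sgn(Gy_j)) - d_{\Sp^{n'-1}}(y_i,y_j)| + |d_{\Sp^{n'-1}}(\Phi x_i,\Phi x_j) - d_{\Sp^{n-1}}(x_i,x_j)| \leq \del/2 + \del/2 = \del.$$
This is the claimed bound. The only real subtlety — the ``main obstacle'' — is bookkeeping: one must make sure the failure probabilities $\eta/2+\eta/2$ and the distortions $\del/2+\del/2$ are split consistently across the two stages, and that the symmetrization step (needed so that Lemma~\ref{lem:JLgeodesic} applies, since it requires $\cD$ symmetric and a $\del$-isometry on $\cD-\cD$) is absorbed into the constants without affecting the stated hypotheses; everything else is a direct citation of results already quoted in the text. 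Note the proof does not use the circulant variance bounds of Section~2 at all — this proposition is the ``baseline'' accelerated Gaussian construction against which the structured embeddings of the rest of the section are compared.
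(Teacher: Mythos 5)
Your proposal is correct and follows essentially the same two-stage argument as the paper: a $\del/2$-distortion guarantee for the JL stage via Lemma~\ref{lem:JLgeodesic} and the cited embedding results, a $\del/2$-binary-embedding guarantee for the Gaussian stage (which the paper simply cites from \cite{PlV14} and \cite{YCP15} rather than re-deriving via Hoeffding and a union bound), and a triangle inequality on the intersection of the two events. Your explicit attention to the symmetrization needed for Lemma~\ref{lem:JLgeodesic} and to conditioning on $\Phi$ is bookkeeping the paper leaves implicit, but it changes nothing substantive.
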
 
\begin{proof}
For $i\in [N]$ we define $y_i=\Phi x_i\in \R^{n'}$. As has been discussed before, under both (i) and (ii) we have with probability $1- \frac{\eta}{2}$
\begin{align}\label{delhadamardProp}
\sup_{i,j\in [N]} |d_{\Sp^{n'-1}}(y_i,y_j)- d_{\Sp^{n-1}}(x_i, x_j)|\leq \del/2.
\end{align}
It is well-known (see e.g.\ \cite{PlV14} or \cite{YCP15}[Proposition 2.2]) that $f_G$ is with probability $1-\frac{\eta}{2}$ a $\del/2$-binary embedding on $\{y_1,\ldots,y_N\}$, i.e.,
\begin{equation*}
\sup_{i,j\in [N]}|d_{H}(f_G(x_i), f_G(x_j)) - d_{\Sp^{n-1}}(x_i,x_j)|\leq \del/2.
\end{equation*} 
Combining these observations yields the result.
\end{proof}
As was already noted in \cite{YCP15}, for any $x\in \Sp^{n-1}$ one can compute $f_A(x)$ in time $\cO(n \log(n))$ under condition (i) if
\begin{align}\label{alternativerun}
\log(N/\eta)\lesssim \del^2 \sqrt{n}. 
\end{align}
Under condition (ii) one can compute $f_A(x)$ in time
$$\cO(\del^{-4}\log^2(N/\eta) + \del^{-1}\log(N/\eta)\|x\|_0).$$
In particular, under (\ref{alternativerun}) the embedding $f_A$ runs in linear time on $\sqrt{n}$-sparse vectors. The running time can even improve further if one is embedding a small set of very sparse vectors, say.\par
We next consider two binary embeddings, for which one $\del$-factor in \eqref{alternativerun} can be removed, provided that $n$ is large enough. Let $g^{(1)},...,g^{(B)}\in \R^{n'}$ be independent standard Gaussian random vectors and $\eps^{(1)}, ..., \eps^{(B)}\in \{-1,1\}^{n'}$ independent Rademacher vectors.
Let $I^{(1)},..., I^{(B)}\subset [n']$ be independent, uniformly random subsets of size $m'=\frac{m}{B}$.
For $s\in [B]$ set
$$\Psi^{(s)}=R_{I^{(s)}} C_{g^{(s)}} D_{\eps^{(s)}}\in \R^{m'\times n'}.$$
We now stack these matrices to obtain the $m\ti n'$ matrix
$$\Psi=\begin{pmatrix}
\Psi^{(1)}\\
 \vdots\\
\Psi^{(B)}
\end{pmatrix}
.$$
\begin{thm}[Median fast binary embeddings]\label{binaryembedding1} Let $\mathcal{D}=\{x_1,...,x_N\}\subset \Sp^{n-1}$. Suppose that one of the following two conditions hold:
\begin{enumerate}
\item[(i)] $\Phi=\Phi_{\operatorname{FJL}}$ is an $n'\ti n$ FJLT with
$$n'\gtrsim  \del^{-2} \log(N/\eta) (\log^3(\log(N/\eta))\log(n) + \log(1/\eta)).$$
\item[(ii)]  $\Phi=\Phi_{\operatorname{SJL}}$ is an $n'\ti n$ SJLT with
$$n'\gtrsim \del^{-2} \log(N/\eta), \qquad s\gtrsim \del^{-1} \log(N/\eta).$$
\end{enumerate}
Suppose, moreover, that
\begin{align*}
B\gtrsim \log(N/\eta), \qquad n' \geq \frac{m}{B}\gtrsim \del^{-2}, \qquad n' \gtrsim \del^{-4}.
\end{align*}
Set $A=\Psi\Phi$. Then with probability at least $1- \eta$, $f_A$ is a $\del$-binary embedding of $\mathcal{D}$ into $(\{-1,1\}^m, d_{\operatorname{med},B})$, i.e.,
\begin{equation*}
\sup_{i,j\in [N]}|d_{\operatorname{med},B}(f_A(x_i), f_A(x_j)) - d_{\Sp^{n-1}}(x_i,x_j)|\leq \del.
\end{equation*}
\end{thm}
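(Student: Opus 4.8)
The plan is to split the error into a dimension-reduction part and a circulant-embedding part, exactly as in Proposition~\ref{alternative}, and then handle the second part by combining the variance bound of Theorem~\ref{varbound}(2) with a median-of-means boosting argument. First I would apply the $\Phi$-step: under either (i) or (ii), Lemma~\ref{lem:JLgeodesic} together with the quoted results of \cite{KrW11,CGV13} (respectively \cite{KaN14}) gives that, with probability at least $1-\tfrac{\eta}{2}$, the points $y_i := \Phi x_i \in \R^{n'}$ satisfy
\begin{equation*}
\sup_{i,j\in[N]} |d_{\Sp^{n'-1}}(y_i,y_j) - d_{\Sp^{n-1}}(x_i,x_j)| \leq \tfrac{\del}{2}.
\end{equation*}
Conditioning on this good event for $\Phi$, it then suffices to show that, with probability at least $1-\tfrac{\eta}{2}$ over the $g^{(s)},\eps^{(s)},I^{(s)}$, the map $f_\Psi$ is a $\tfrac{\del}{2}$-binary embedding of $\{y_1,\dots,y_N\}$ into $(\{-1,1\}^m, d_{\operatorname{med},B})$; composing the two estimates with the triangle inequality yields the claim. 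Note we may assume each $y_i\neq 0$ (it has norm in $[1/2,3/2]$ on the good event) and $d_{\operatorname{med},B}$ is insensitive to positive rescaling, so WLOG $y_i\in\Sp^{n'-1}$.

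Next I would analyze a single block. Fix a pair $i,j$ and a block index $s$. Since $\Psi^{(s)} = R_{I^{(s)}} C_{g^{(s)}} D_{\eps^{(s)}}$ with $I^{(s)}$ a uniformly random size-$m'$ subset of $[n']$, equation~\eqref{eqn:unbEst} gives that $d_H(f_{\Psi^{(s)}}(y_i), f_{\Psi^{(s)}}(y_j))$ is an unbiased estimator of $d_{\Sp^{n'-1}}(y_i,y_j)$, and Theorem~\ref{varbound}(2) gives
\begin{equation*}
\Var\big(d_H(f_{\Psi^{(s)}}(y_i),f_{\Psi^{(s)}}(y_j))\big) \lesssim \frac{1}{m'} + \frac{1}{\sqrt{n'}}.
\end{equation*}
Under the hypotheses $m'\gtrsim\del^{-2}$ and $n'\gtrsim\del^{-4}$, the right-hand side is $\lesssim \del^2$ (with a small absolute constant that we can make as small as we like by enlarging the implicit constants in $m'\gtrsim\del^{-2}$, $n'\gtrsim\del^{-4}$). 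By Chebyshev's inequality, for a suitable choice of these constants,
\begin{equation*}
\bP\Big(|d_H(f_{\Psi^{(s)}}(y_i),f_{\Psi^{(s)}}(y_j)) - d_{\Sp^{n'-1}}(y_i,y_j)| \geq \tfrac{\del}{2}\Big) \leq \tfrac{1}{4}.
\end{equation*}

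Finally I would boost via the median. The $B$ blocks $\Psi^{(1)},\dots,\Psi^{(B)}$ are independent, so for a fixed pair $i,j$ the event $\{|d_H(f_{\Psi^{(s)}}(y_i),f_{\Psi^{(s)}}(y_j)) - d_{\Sp^{n'-1}}(y_i,y_j)| \geq \tfrac{\del}{2}\}$ occurs for at least half of the $s\in[B]$ only if a sum of i.i.d.\ Bernoulli$(\leq\tfrac14)$ variables exceeds $B/2$; a Chernoff/Hoeffding bound makes this probability at most $e^{-cB}$ for an absolute $c>0$. On the complementary event, more than half of the blockwise distances lie within $\tfrac{\del}{2}$ of $d_{\Sp^{n'-1}}(y_i,y_j)$, hence so does their median, i.e.
\begin{equation*}
|d_{\operatorname{med},B}(f_\Psi(y_i),f_\Psi(y_j)) - d_{\Sp^{n'-1}}(y_i,y_j)| \leq \tfrac{\del}{2}.
\end{equation*}
A union bound over the $\binom{N}{2}\leq N^2$ pairs shows this holds simultaneously with probability at least $1 - N^2 e^{-cB}$, which is $\geq 1-\tfrac{\eta}{2}$ once $B\gtrsim\log(N/\eta)$. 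Combining with the $\Phi$-step finishes the proof. The main obstacle — and the reason the theorem was not available before — is precisely the per-block variance estimate: everything else is the standard JL-composition plus median-of-means machinery, whereas the bound $\Var \lesssim 1/m' + 1/\sqrt{n'}$ of Theorem~\ref{varbound}(2), which is what allows each block to use only $m'\gtrsim\del^{-2}$ rows (rather than $\del^{-3}$ as in the well-spreadness-based approach), is the substantive input; the side condition $n'\gtrsim\del^{-4}$ that propagates to the global requirement $n\gtrsim\del^{-6}$ is exactly the price paid for the residual $1/\sqrt{n'}$ term.
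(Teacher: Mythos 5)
Your proposal is correct and follows essentially the same route as the paper's proof: dimension reduction via Lemma~\ref{lem:JLgeodesic}, a per-block Chebyshev bound from the unbiasedness \eqref{eqn:unbEst} and the variance estimate \eqref{eqn:varboundMainRandom}, then median boosting via Hoeffding and a union bound over pairs. Your explicit remark that the $y_i$ may be rescaled to the sphere (since the sign map and the geodesic distance are scale-invariant) is a small point the paper leaves implicit, but it does not change the argument.
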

Theorem~\ref{binaryembedding1} under condition (i) is essentially the result claimed in \cite[Theorem 3.8]{YCP15}. The proof in \cite{YCP15}, however, contains a gap (see Remark~\ref{rem:proofGap} below). Let us note that our construction of the $\Psi^{(s)}$ is slightly different from the one in \cite{YCP15}. Instead of a Gaussian Toeplitz matrices we use Gaussian circulant matrices (although our proof works for Toeplitz matrices as well) and, more importantly, we use uniform random subsampling instead of deterministic subsampling, in order to invoke Theorem~\ref{varbound}.    
\begin{proof} 
For $i\in [N]$ we define $y_i=\Phi x_i\in \R^{n'}$. By the discussion prior to Proposition~\ref{alternative}, under both (i) and (ii) we have with probability $1- \frac{\eta}{2}$
\begin{align}\label{delhadamard}
\sup_{i,j\in [N]} |d_{\Sp^{n-1}}(x_i, x_j)-d_{\Sp^{n'-1}}(y_i,y_j)|\leq \del.
\end{align}
Fix $i,j\in [N]$. By subsequently applying (\ref{eqn:unbEst}), Markov's inequality and (\ref{eqn:varboundMainRandom}) in Theorem~\ref{varbound}, we obtain
\begin{align*}
&\Pb( |d_H(\sgn(\Psi^{(s)}y_i),  \sgn(\Psi^{(s)}y_j))- d_{\Sp^{n'-1}}(y_i,y_j)|\geq \del )\\
& \qquad = \Pb( |d_H(\sgn(\Psi^{(s)}y_i),  \sgn(\Psi^{(s)}y_j))- \E d_H(\sgn(\Psi^{(s)}y_i),  \sgn(\Psi^{(s)}y_j))|\geq \del )\\
& \qquad \leq \del^{-2} \Var\big(d_H(\sgn(\Psi^{(s)}y_i),  \sgn(\Psi^{(s)}y_j)\big)\\
&\qquad \lesssim \del^{-2}\Big( \frac{1}{m'}+ \frac{1}{\sqrt{n'}}\Big)\leq \frac{1}{4},
\end{align*}
where in the last step we use $m'\gtrsim \del^{-2}$ and $n'\gtrsim \del^{-4}$. Defining $$E_s=1_{\{|d_H(\sgn(\Psi^{(s)}y_i),  \sgn(\Psi^{(s)}y_j))- d_{\Sp^{n'-1}}(y_i,y_j)|\geq \del \}},$$
this translates to $\E E_s\leq \frac{1}{4}$. Observe
\begin{align*}
\Big\{\sum_{s=1}^B E_s < \frac{B}{2}\Big\}\subset \Big\{ \big|d_{\operatorname{med},B}(\sgn(\Psi y_i),  \sgn(\Psi y_j))- d_{\Sp^{n'-1}}(y_i,y_j)\big|< \del \Big\},
\end{align*}
which implies using $-\E E_s \geq -\frac{1}{4}$
\begin{align*}
&\Pb( \big|d_{\operatorname{med},B}(\sgn(\Psi y_i),  \sgn(\Psi y_j))- d_{\Sp^{n'-1}}(y_i,y_j)\big|\geq  \del  )\\
&\qquad \leq \Pb\Big( \sum_{s=1}^B E_s \geq \frac{B}{2}\Big)= \Pb\Big(\frac{1}{B} \sum_{s=1}^B E_s \geq \frac{1}{2}\Big)\\
&\qquad \leq \Pb\Big(\frac{1}{B} \sum_{s=1}^B (E_s- \E E_s )\geq \frac{1}{4}\Big).
\end{align*}
Since $\{E_s\}_{s\in [B]}$ are independent, Hoeffding's inequality yields
\begin{align*}
\Pb( \big|d_{\operatorname{med},B}(\sgn(\Psi y_i),  \sgn(\Psi y_j))- d_{\Sp^{n'-1}}(y_i,y_j)\big|\geq  \del  )\leq e^{-\frac{B}{32}}.
\end{align*}
Since this holds for any $i,j\in[N]$, a union bound now implies
\begin{align}\label{delhamming}
\Pb\Big(\sup_{i,j\in [N]}\big|d_{\operatorname{med},B}(\sgn(\Psi y_i),  \sgn(\Psi y_j))- d_{\Sp^{n'-1}}(y_i,y_j)\big|\geq  \del\Big)\leq N^2 e^{-\frac{B}{32}}\leq \frac{\eta}{2},
\end{align}
where in the last step we use $B\gtrsim \log(N/\eta)$.
The triangle inequality using \eqref{delhadamard} and \eqref{delhamming}  yields
$$\sup_{i,j\in [N]}|d_{\operatorname{med},B}(f_A(x_i), f_A(x_j)) - d_{\Sp^{n-1}}(x_i,x_j)|\leq 2\del,$$
with probability at least $1- \eta$. A rescaling in $\del$ yields the result. 
\end{proof}
\begin{rem}
\label{rem:proofGap}
Let us now briefly discuss the subtle proof gap occurring in \cite{YCP15}. As was noted before, they considered mappings $\Psi^{(s)}$ which used deterministic subsampling (instead of uniform random subsampling) and Gaussian Toeplitz matrices (instead of Gaussian circulant matrices). In their proof they claimed that
\begin{align*}
\Var\Big(d_H\big(\sgn(\Psi^{(s)} y_i) , \sgn(\Psi^{(s)} y_j)\big) \Big) \lesssim \frac{1}{m'}
\end{align*}
by reasoning as follows. Let $a_k$ be the $k$-th row of $\Psi^{(s)}$. They first showed (correctly) that $\sgn(\skp{a_k}{y_i})$ and $\sgn(\skp{a_\ell}{y_j})$ are pairwise independent
for any $y_i,y_j \in \R^{n'}$ and $k\neq \ell$ \cite[Lemma 3.7]{YCP15}. They then wrote
\begin{align*}
\Var\Big(d_H\big(\sgn(\Psi^{(s)} y_i) , \sgn(\Psi^{(s)} y_j)\big) \Big) & = \Var\Big(\frac{1}{m'}\sum_{k=1}^{m'} 1_{\sgn(\skp{a_k}{y_i})\neq \sgn(\skp{a_k}{y_j})}\Big) \\
& =\frac{1}{(m')^2} \sum_{k=1}^{m'} \Var(1_{\sgn(\skp{a_k}{y_i})\neq \sgn(\skp{a_k}{y_j})}) \leq \frac{1}{m'}.
\end{align*}
Unfortunately, even though $\sgn(\skp{a_k}{y_i})$ and $\sgn(\skp{a_\ell}{y_j})$ are pairwise independent, it is \emph{not} true that 
$$1_{\sgn(\skp{a_k}{y_i})\neq \sgn(\skp{a_k}{y_j})} \qquad \text{and} \qquad 1_{\sgn(\skp{a_\ell}{y_i})\neq \sgn(\skp{a_\ell}{y_j})}$$
are pairwise independent for any $y_i,y_j$ and $k\neq \ell$. 
\end{rem}
\begin{rem} 
\label{rem:runtimeDis}
Let us compare the running times of the four different binary embeddings for $\eta$ being a constant. Note first that if 
\begin{equation}\label{binaryrun}
\log(N)\lesssim \del \sqrt{\frac{n}{\log(\del^{-1})}} \min \Big\{\del^3 \sqrt{\frac{n}{\log(\del^{-1})}} \log(n), 1\Big\},
\end{equation}
then the median fast binary embedding with FJLT can be computed in $\tilde{\cO}(n \log(n))$, where $\tilde{\cO}$ hides $\log\log$-factors. Indeed, computation time of the FJLT is $\cO(n \log(n))$
and matrix-vector multiplication for each of the $B$ blocks can be computed in 
$$\cO(n' \log(n'))=\tilde{\cO}\big( \del^{-2}\log(\del^{-1}) \max\{ \del^{-2}, \log(N) \log(n)\}\big).$$
Hence, total computation time is
\begin{align*}
&\cO (B n' \log(n')+ n\log(n) )\\
&=\tilde{\cO} \big(\del^{-2}\log(\del^{-1})\log(N) \max\{ \del^{-2}, \log(N) \log(n)\} + n\log(n) \big)  =\tilde{\cO} (n\log(n)),
\end{align*}
if (\ref{binaryrun}) is satisfied. In particular this holds if $n\gtrsim \del^{-6}$ and $\log(N)\lesssim \del \sqrt{n}/\sqrt{\log(1/\del)}$. In comparison, the accelerated Gaussian binary embedding with FJLT achieves running time $\cO (n\log(n))$, if $\log(N)\lesssim \del^2 \sqrt{n}$.\par 
Let us now consider the median fast binary embedding with the SJLT. The running time of this embedding is 
\begin{align*}
& \cO(Bn'\log(n') + \del^{-1}\log(N)\|x\|_0) \\
& \qquad = \tilde{\cO}(\log(\del^{-1})(\del^{-2}\log^2(N) + \del^{-4}\log(N)) + \del^{-1}\log(N)\|x\|_0).
\end{align*}
Thus, this embedding can achieve a (near-)linear embedding time in a larger range of $N$ compared to the accelerated Gaussian binary embedding with SJLT. Indeed, if $\log N\leq \del\sqrt{n}$ and $n\gtrsim \del^{-6}$, then the running time is $\tilde{O}(n\log n+\sqrt{n}\|x\|_0)$ which is near-linear if $\|x\|_0 \leq \sqrt{n}$. In comparison, for the accelerated Gaussian binary embedding we required $\log N\leq \del^2\sqrt{n}$ to obtain linear running time on $\sqrt{n}$-sparse vectors.\par
In contrast to the running times for the embeddings involving the FJLT, the running times for the SJLT-based embeddings can be even faster than $\cO(n\log n)$, for instance if the number of vectors $N$ is small and the vectors are sparse.

Let us finally note that the median fast binary embeddings with FJLT and SJLT can both be computed in time $\tilde{\cO}(n \log(n))$ if  $n\gtrsim \del^{-4}$ (instead of $n\gtrsim \del^{-6}$), if additionally $\log N\leq \del^2\sqrt{n}$ (the condition for the accelerated Gaussian binary embeddings) holds.
\end{rem}
\begin{rem}
In the proof of Theorem~\ref{binaryembedding1}, the optimal $1/m$ scaling in (\ref{eqn:varboundMainRandom}) is essential. In particular, it is not possible to instead use the variance bound in (\ref{eqn:varboundMainDet}). It would be interesting to remove the $1/\sqrt{n}$ factor in (\ref{eqn:varboundMainRandom}). This would remove the condition $n\gtrsim \del^{-6}$ from the discussion in Remark~\ref{rem:runtimeDis}. 
\end{rem}

\bibliographystyle{amsplain}
\bibliography{circBib}

\end{document}